\theoremstyle{plain}
\numberwithin{equation}{section}
\newtheorem{thm}{Theorem}[section]
\newtheorem{lem}[thm]{Lemma}
\newtheorem{cor}[thm]{Corollary}
\newenvironment{exam}[1]
{\begin{flushleft}\textbf{Example #1}.\enspace}%
{\end{flushleft}}
\newcommand{\complex}{{\mathbb C}}
\newcommand{\trace}{\mathrm{tr}}
\newcommand{\instr}{\mathrm{In}}
\newcommand{\rmco}{\mathop{\rm co}}
\newcommand{\rmt}{\mathop{\rm T}}
\newcommand{\ob}{\mathrm{Ob}}
\newcommand{\ityes}{\textit{yes}}
\newcommand{\itno}{\textit{no}}
\newcommand{\escript}{\mathcal{E}}
\newcommand{\hscript}{\mathcal{H}}
\newcommand{\iscript}{\mathcal{I}}
\newcommand{\jscript}{\mathcal{J}}
\newcommand{\kscript}{\mathcal{K}}
\newcommand{\lscript}{\mathcal{L}}
\newcommand{\mscript}{\mathcal{M}}
\newcommand{\oscript}{\mathcal{O}}
\newcommand{\pscript}{\mathcal{P}}
\newcommand{\sscript}{\mathcal{S}}
\newcommand{\iscripthat}{\widehat{\iscript}}
\newcommand{\jscripthat}{\widehat{\jscript}}
\newcommand{\kscripthat}{\widehat{\kscript}}
\newcommand{\mscripthat}{\widehat{\mscript}}
\newcommand{\hscriptbar}{\overline{\hscript}}
\newcommand{\iscriptbar}{\overline{\iscript}}
\newcommand{\jscriptbar}{\overline{\jscript}}
\newcommand{\kscriptbar}{\overline{\kscript}}
\newcommand{\doubleab}[1]{\left|\left|#1\right|\right|}
\newcommand{\brac}[1]{\left\{#1\right\}}
\newcommand{\paren}[1]{\left(#1\right)}
\newcommand{\sqbrac}[1]{\left[#1\right]}
\newcommand{\elbows}[1]{{\left\langle#1\right\rangle}}
\newcommand{\ket}[1]{{\left|#1\right>}}
\newcommand{\bra}[1]{{\left<#1\right|}}
\begin{document}

\title{A THEORY OF QUANTUM INSTRUMENTS}
\author{Stan Gudder\\ Department of Mathematics\\
University of Denver\\ Denver, Colorado 80208\\
sgudder@du.edu}
\date{}
\maketitle

\begin{abstract}
Until recently, a quantum instrument was defined to be a completely positive operation-valued measure from the set of states on a Hilbert space to itself. In the last few years, this definition has been generalized to such measures between sets of states from different Hilbert spaces called the input and output Hilbert spaces. This article
presents a theory of such instruments.Ways that instruments can be combined such as convex combinations, post-processing, sequential products, tensor products and conditioning are studied. We also consider marginal, reduced instruments and how these are used to define coexistence (compatibility) of instruments. Finally, we present a brief introduction to quantum measurement models where the generalization of instruments is essential. Many of the concepts of the theory are illustrated by examples. In particular, we discuss Holevo and Kraus instruments.
\end{abstract}

\section{Introduction}  
In classical physics a measurement of a physical system does not alter the state of the system. Because of this, a measurement does not interfere with later measurements. An important characteristic of quantum mechanics is that the state of a system can change into an updated state when a measurement is performed. An even more surprising and radical possibility has been recently introduced \cite{dapt22,bkmpt22,mt122,mt222}. These works have pointed out that when the initial state $\rho$ of a quantum system is represented by a density operator on an input Hilbert space $H$, then the updated state after a measurement is performed may be represented by a density operator $\rho _1$ in a different output Hilbert space $H_1$. Not only can the state of the system change as the result of a measurement, but the entire system can be altered so it is described by a different Hilbert space. This is truly an amazing new possibility! In this work we represent measurements by instruments acting on states of a Hilbert space. We present a theory of quantum instruments that emphasizes this new possibility.

Ways that instruments can be combined such as convex combinations, post-processing, tensor products, sequential products and conditioning are studied
\cite{gn01,gud120,gud220,gud21,gud22}. We also consider marginal and reduced instruments. These concepts are employed to define coexistence (compatibility) of instruments and observables. Although compatibility has been well presented in the literature \cite{dapt22,bkmpt22,ls22,mt122,mt222}, we point out some of its features here. Even when two instruments have different output spaces, if their input space $H$ is the same, then the observables they measure are on $H$. Because of this, we can compare these measured observables. Finally, we consider measurement models that can be used to measure instruments \cite{hz12,nc00}. These models strongly rely on the fact that instruments can have different input and output spaces. Many of the concepts of the theory are illustrated by examples. In particular, a theory of Holevo and Kraus instruments are considered \cite{hol82,kra83}.

Section~2 presents the basic concepts and definitions of the theory. In particular, we discuss the concepts of effects, observables, operations and instruments
\cite{bgl95,dl70,gn01,hz12,nc00}. Section 3 gives examples of various instruments that illustrate the theory. An important role is played by Holevo and Kraus instruments \cite{hol82,kra83}. In Section 4, we discuss theorems and results concerning instruments and observables. For example, we show that an observable conditioned on an instrument coexists with the observable measured by the instrument. Section 5 introduces the concept of a quantum measurement model. The instrument that such a model measures employs a L\"uders instrument \cite{lud51}. We also give a new definition of the sequential product of measurement models \cite{gn01}.

\section{Basic Definitions and Concepts}  
In this work, all of our Hilbert spaces are assumed to be finite dimensional. Although this is a strong restriction, it is general enough to include theories of quantum computation and information \cite{hz12,nc00}. We retain this restriction for mathematical simplicity even though many of our results can be extended to the infinite dimensional case. The set of (bounded) linear operators on a Hilbert space $H$ is denoted by $\lscript (H)$ and the zero and identity operators are $0$ and $I$, respectively. When it is necessary to distinguish the Hilbert space, we write $I_H$ instead of $I$. An \textit{operation} from $H$ to $H_1$ is a completely positive, trace non-increasing, linear map $\jscript\colon\lscript (H)\to\lscript (H_1)$ \cite{dl70,hz12,nc00}. We denote the set of operations from $H$ to $H_1$ by $\oscript (H,H_1)$. For simplicity, we write $\oscript (H)=\oscript (H,H)$ when $H=H_1$. If $\jscript _1\in\oscript (H,H_1)$, $\jscript _2\in\oscript (H_1,H_2)$, their \textit{sequential product}
$\jscript _1\circ\jscript _2\in\oscript (H,H_2)$ is given by $\jscript _1\circ\jscript _2(A)=\jscript _2\paren{\jscript _1(A)}$. If $\jscript\in\oscript (H,H_1)$ is trace preserving we call $\jscript$ a \textit{channel}. Every operation $\jscript\in\oscript (H,H_1)$ has the form $\jscript (A)=\sum\limits _{i=1}^nJ_iAJ_i^*$ where $J_i\colon H\to H_1$ is a linear operator with adjoint $J_i^*$ and $\sum\limits _{i=i} ^nJ_i^*J_i\le I_H$ \cite{hz12,nc00}. The operators $J_i$, $i=1,2,\ldots ,n$ are called \textit{Kraus operators} for $\jscript$ \cite{kra83}. We have that $\jscript$ is a channel if and only if $\sum _{i=1}^nJ_i^*J_i=I_H$. If $\jscript\in\oscript (H,H_1)$ we define the unique
\textit{dual map} $\jscript ^*\colon\lscript (H_1)\to\lscript (H)$ by $\trace\sqbrac{B\jscript ^*(A)}=\trace\sqbrac{\jscript (B)A}$ for all $B\in\lscript (H)$, $A\in\lscript (H_1)$
\cite{gud22}. If $\jscript$ has Kraus decomposition $\jscript (A)=\sum\limits _{i=1}^nJ_iAJ_i^*$ then $\jscript ^*(B)=\sum _{i=1}^nJ_i^*BJ_i$. If $\jscript$ is a channel, then $\jscript ^*(I_{H_1})=I_H$ because 
\begin{equation*}
\trace\sqbrac{B\jscript ^*(I_{H_1})}=\trace\sqbrac{\jscript (B)I_{H_1}}=\trace\sqbrac{\jscript (B)}=1=\trace (BI_{H})
\end{equation*}
for all $B\in\lscript (H)$. A positive operator $\rho\in\lscript (H)$ with trace $\trace (\rho )=1$ is called a \textit{state} on $H$. A state describes the condition of a quantum system and the set of states on $H$ is denoted by $\sscript (H)$. We see that if $\rho\in\sscript (H)$ and $\jscript\in\oscript (H,H_1)$ is a channel, then
$\jscript (\rho )\in\sscript (H_1)$. Also, it is easy to check that $(\jscript _1\circ\jscript _2)^*=\jscript _2^*\circ\jscript _1^*$.

A (finite) \textit{instrument} is a finite set $\iscript =\brac{\iscript _x\colon x\in\Omega _\iscript}$ where $\iscript _x\in\oscript (H,H_1)$ such that
$\iscriptbar =\sum\limits _{x\in\Omega _\iscript}\iscript _x$ is a channel \cite{dl70,hz12,nc00}. An instrument is sometimes called an \textit{operation-valued measure}. We call $\Omega _\iscript$ the \textit{outcome space} for $\iscript$ and designate the set of instruments from $H$ to $H_1$ by $\instr (H,H_1)$. We think of
$\iscript\in\instr (H,H_1)$ as an apparatus or experiment that has outcomes $x\in\Omega _\iscript$. The probability that outcome $x$ occurs when $\iscript$ is measured and the system is in state $\rho\in\sscript (H)$ is given by the Born rule $\trace\sqbrac{\iscript _x(\rho )}$ \cite{hz12,nc00}. Since $\iscript _x$ is positive and
$\iscriptbar$ is a channel, we have that $0\le\trace\sqbrac{\iscript _x(\rho )}\le 1$ and $\sum\limits _{x\in\Omega _\iscript}\trace\sqbrac{\iscript _x(\rho )}=1$ so
$x\mapsto\trace\sqbrac{\iscript _x(\rho )}$ is a probability measure on $\Omega _\iscript$. If $\trace\sqbrac{\iscript _x(\rho )}\ne 0$ and $\rho\in\sscript (H)$ is the initial state of the system, then $\iscript _x(\rho )/\trace\sqbrac{\iscript _x(\rho )}\in\sscript (H_1)$ is the \textit{updated} state after the outcome $x$ occurs. As pointed out in Section~1, this updated state can be in a different Hilbert space $H_1$ than the input space $H$. If $\iscript\in\instr (H,H_1)$ we call the probability measure
$\Phi _\rho ^\iscript (x)=\trace\sqbrac{\iscript _x(\rho )}$ the $\rho$-\textit{distribution} of $\iscript$. As we shall see, two different instruments can have the same
$\rho$-distribution for all $\rho\in\sscript (H)$. A \textit{bi-instrument} $\iscript\in\instr (H,H_1)$ is an instrument whose outcome space has the product form
$\Omega _\iscript =\Omega _1\times\Omega _2$ and we write $\iscript _{xy}(\rho )$, $x\in\Omega _1$, $y\in\Omega _2$. In this case, we define the 1-\textit{marginal} and 2-\textit{marginal} of $\iscript$ by $\iscript _x^1(\rho )=\sum\limits _{y\in\Omega _2}\iscript _{xy}(\rho )$ and
$\iscript _y^2=\sum\limits _{x\in\Omega _1}\iscript _{xy}(\rho )$, respectively. This gives us the three instruments $\iscript ,\iscript ^1,\iscript ^2\in\instr (H,H_1)$. Notice that these instruments give the same channels because
\begin{equation*}
\iscriptbar (\rho )=\sum\limits _{xy}\iscript _{xy}(\rho )=\sum\limits _x\sum\limits _y\iscript _{xy}(\rho )=\sum\limits _x\iscript _x^1(\rho )=\iscriptbar\,^1(\rho )
\end{equation*}
and similarly, $\iscriptbar (\rho )=\iscriptbar\,^2(\rho )$ for all $\rho\in\sscript (H)$.

If $\iscript\in\instr (H,H_1)$ and $\jscript\in\instr (H_1,H_2)$, the \textit{sequential product of} $\iscript$ \textit{then} $\jscript$ is the bi-instrument $\iscript\circ\jscript\in\instr (H,H_2)$ given by
\begin{equation*}
(\iscript\circ\jscript )_{xy}(\rho )=\jscript _y\paren{\iscript _x(\rho )}
\end{equation*}
for all $\rho\in\sscript (H)$, $x\in\Omega _\iscript$, $y\in\Omega _\jscript$. Notice that $\Omega _{\iscript\circ\jscript}=\Omega _\iscript\times\Omega _\jscript$. We call the 2-marginal
\begin{equation*}
(\jscript\mid\iscript )_y(\rho )=(\iscript\circ\jscript )_y^2(\rho )=\sum _x(\iscript\circ\jscript )_{xy}(\rho )=\sum _x\jscript _y\paren{\iscript _x(\rho )}
   =\jscript _y\paren{\,\iscriptbar (\rho )}
\end{equation*}
the \textit{instrument} $\jscript$ \textit{given} (or \textit{conditioned by} or \textit{in the context of}) $\iscript$ and we call the 1-marginal
\begin{equation*}
(\iscript\rmt\jscript )_x(\rho )=(\iscript\circ\jscript )_x^1(\rho )=\sum _y(\iscript\circ\jscript )_{xy}(\rho )=\sum _y\jscript _y\paren{\iscript _x(\rho )}
   =\jscriptbar\paren{\iscript _x(\rho )}
\end{equation*}
the \textit{instrument} $\iscript$ \textit{then} $\jscript$ \cite{gud120,gud22}. If $\kscript\in\instr (H,H_1\otimes H_2)$ we have the \textit{reduced instruments}
$\kscript _1\in\instr(H,H_1)$, $\kscript _2\in\instr (H,H_2)$ given by the partial traces $\kscript _{1x}(\rho )=\trace _{H_2}\sqbrac{\kscript _x(\rho )}$,
$\kscript _{2x}(\rho )=\trace _{H_1}\sqbrac{\kscript _x(\rho )}$. Notice that $\kscript _1,\kscript _2$ have the same $\rho$-distributions for all $\rho\in\sscript (H)$.

If $\iscript _i\in\instr (H,H_1)$, $i=1,2,\ldots ,n$, with the same outcome space $\Omega$ and $\lambda _i\in\sqbrac{0,1}$ with
$\sum\limits _{i=1}^n\lambda _i=1$, then $\iscript =\sum\limits _{i=1}^n\lambda _i\iscript _i$ given by $\iscript _x=\sum\limits _{i=1}^n\lambda _i\iscript _{1x}$,
$x\in\Omega$, is called a \textit{convex combination} of the $\iscript _i$ \cite{gud220}. We have that
\begin{equation*}
\Phi _\rho ^{\sum\lambda _i\iscript _i}(x)=\trace\sqbrac{\sum _{i=1}^n\lambda _i\iscript _{ix}(\rho )}=\sum _{i=1}^n\lambda _i\trace\sqbrac{\iscript _{ix}(\rho )}
   =\sum _{i=1}^n\lambda _i\Phi _x^{\iscript _i}(\rho )
\end{equation*}
for all $\rho\in\sscript (H)$. Thus, the distribution of a convex combination is the convex combination of the distributions. Convex combinations are an important way of combining instruments. We now consider another important way. If $\iscript\in\instr (H,H_1)$ and $\lambda _{xz}\in\sqbrac{0,1}$ with $\sum\limits _z\lambda _{xz}=1$ for all $x\in\Omega _\iscript$, then the instrument $\pscript\in\instr (H,H_1)$ given by $\pscript _z(\rho )=\sum\limits _x\lambda _{xz}\iscript _x(\rho )$ is called a
\textit{post-processing} of $\iscript$ \cite{dapt22,hz12}. Two instruments $\iscript\in\instr (H,H_1)$ and $\jscript\in\instr (H_,H_2)$ \textit{coexist}
(\textit{are compatible}) \cite{ls22}, denoted by $\iscript\rmco\jscript$, if there exists a \textit{joint bi-instrument} $\kscript\in\instr (H,H_1\otimes H_2)$ with
$\Omega _\kscript =\Omega _\iscript\times\Omega _\jscript$ such that for all $x\in\Omega _\iscript$, $y\in\Omega _\jscript$, $\rho\in\sscript (H)$ we have
\begin{align*}
\kscript _{1x}^1(\rho )&=\sum _{y\in\Omega _\jscript}\trace _{H_2}\sqbrac{\kscript _{xy}(\rho )}=\iscript _x(\rho )\\
\kscript _{2y}^2(\rho )&=\sum _{x\in\Omega _\iscript}\trace _{H_1}\sqbrac{\kscript _{xy}(\rho )}=\jscript _y(\rho )
\end{align*}
Thus, two coexisting instruments can be constructed from the same bi-instrument so they are simultaneously measurable. A complete discussion of this concept is found in \cite{dapt22,bkmpt22,mt122,mt222}.
\begin{lem}    
\label{lem21}
If $\iscript\rmco\jscript$ and $\pscript$ is a post-processing of $\iscript$, then $\pscript\rmco\jscript$.
\end{lem}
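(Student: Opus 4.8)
The plan is to witness $\pscript\rmco\jscript$ by applying the same post-processing to the first outcome index of the joint bi-instrument that already witnesses $\iscript\rmco\jscript$. Since $\iscript\rmco\jscript$, there is a joint bi-instrument $\kscript\in\instr (H,H_1\otimes H_2)$ with $\Omega _\kscript =\Omega _\iscript\times\Omega _\jscript$ satisfying $\sum _y\trace _{H_2}\sqbrac{\kscript _{xy}(\rho )}=\iscript _x(\rho )$ and $\sum _x\trace _{H_1}\sqbrac{\kscript _{xy}(\rho )}=\jscript _y(\rho )$. Let $\lambda _{xz}\in\sqbrac{0,1}$ with $\sum _z\lambda _{xz}=1$ be the post-processing weights defining $\pscript _z(\rho )=\sum _x\lambda _{xz}\iscript _x(\rho )$. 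I would introduce the candidate $\lscript\in\instr (H,H_1\otimes H_2)$ with outcome space $\Omega _\pscript\times\Omega _\jscript$ defined by
\begin{equation*}
\lscript _{zy}(\rho )=\sum _x\lambda _{xz}\kscript _{xy}(\rho ).
\end{equation*}

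First I would verify that $\lscript$ is a genuine instrument. Each $\lscript _{zy}$ is a non-negative linear combination of the operations $\kscript _{xy}\in\oscript (H,H_1\otimes H_2)$, hence completely positive; and because the total map
\begin{equation*}
\lscriptbar (\rho )=\sum _{z,y}\lscript _{zy}(\rho )=\sum _{x,y}\paren{\sum _z\lambda _{xz}}\kscript _{xy}(\rho )=\sum _{x,y}\kscript _{xy}(\rho )=\kscriptbar (\rho )
\end{equation*}
coincides with the channel $\kscriptbar$, every summand $\lscript _{zy}$ is trace non-increasing and $\lscriptbar$ is a channel. Thus $\lscript\in\instr (H,H_1\otimes H_2)$.

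It then remains to check the two coexistence identities for the pair $\pscript ,\jscript$. For the $\pscript$-marginal I would pull the partial trace and the sum over $y$ through the finite sum over $x$ and apply the $\iscript$-identity for $\kscript$:
\begin{equation*}
\sum _y\trace _{H_2}\sqbrac{\lscript _{zy}(\rho )}=\sum _x\lambda _{xz}\sum _y\trace _{H_2}\sqbrac{\kscript _{xy}(\rho )}=\sum _x\lambda _{xz}\iscript _x(\rho )=\pscript _z(\rho ).
\end{equation*}
For the $\jscript$-marginal, summing over the new outcome $z$ collapses the weights via $\sum _z\lambda _{xz}=1$, after which the $\jscript$-identity for $\kscript$ applies:
\begin{equation*}
\sum _z\trace _{H_1}\sqbrac{\lscript _{zy}(\rho )}=\sum _x\paren{\sum _z\lambda _{xz}}\trace _{H_1}\sqbrac{\kscript _{xy}(\rho )}=\sum _x\trace _{H_1}\sqbrac{\kscript _{xy}(\rho )}=\jscript _y(\rho ).
\end{equation*}
Hence $\lscript$ witnesses $\pscript\rmco\jscript$.

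Since every manipulation only interchanges finite sums with partial traces, I expect no serious obstacle; the only points needing (routine) care are confirming that $\lscript$ is really an instrument, which reduces to observing that post-processing leaves the total channel $\kscriptbar$ intact, and keeping straight which outcome index each marginal sums over. The essential idea is that post-processing on the $\iscript$ side commutes with forming the joint bi-instrument and leaves the $\jscript$-marginal undisturbed.
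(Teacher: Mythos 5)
Your proposal is correct and follows essentially the same route as the paper: both construct the witness $\lscript _{zy}=\sum _x\lambda _{xz}\kscript _{xy}$ and verify the two marginal identities by interchanging the finite sums with the partial traces. Your additional check that $\lscript$ is genuinely an instrument (via $\lscriptbar =\kscriptbar$) is a detail the paper leaves implicit, but it does not change the argument.
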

\begin{proof}
Suppose $\iscript\in\instr (H,H_1)$ and $\jscript\in\instr (H,H_2)$ and let $\kscript\in\instr (H,H_1\otimes H_2)$ be a joint bi-instrument for $\iscript ,\jscript$. If
$\pscript _z=\sum\limits _x\lambda _{xz}\iscript _x$ is a post-processing of $\iscript$, define the bi-instrument $\lscript _{zy}=\sum\limits _x\lambda _{xz}\kscript _{xy}$.
We then obtain
\begin{align*}
\lscript _{1z}^1(\rho )&=\sum _y\trace _{H_2}\sqbrac{\lscript _{zy}(\rho )}=\sum _{x,y}\lambda _{xz}\trace _{H_2}\sqbrac{\kscript _{xy}(\rho )}\\
   &=\sum _x\lambda _{xz}\kscript _{1z}^1(\rho )=\sum _x\lambda _{xz}\iscript _x(\rho )=\pscript _z(\rho )
\intertext{and}
\lscript _{2y}^2(\rho )&=\sum _z\trace _{H_1}\sqbrac{\lscript _{zy}(\rho )}=\sum _{x,z}\lambda _{xz}\trace _{H_1}\sqbrac{\kscript _{xy}(\rho )}\\
   &=\sum _x\trace _{H_1}\sqbrac{\kscript _{xy}(\rho )}=\kscript _{2y}^2(\rho )=\jscript _y(\rho )
\end{align*}
Hence, $\lscript$ is a joint bi-instrument for $\pscript$ and $\jscript$ so $\pscript\rmco\jscript$.
\end{proof}

If $A,B\in\lscript (H)$ satisfy $\elbows{\phi ,A\phi}\le\elbows{\phi ,B\phi}$ for all $\phi\in H$ we write $A\le B$ and if $0\le a\le I$ we call $a$ an \textit{effect}. 
An effect corresponds to a true-false (\ityes-\itno) experiment and $0,I$ are the effects that are always false or always true, respectively. We denote the set of effects on $H$ by $\escript (H)$. If $\rho\in\sscript (H)$, $a\in\escript (H)$, the $\rho$-\textit{probability} of $a$ is $\trace (\rho a)$. Thus, $\trace (\rho a)$ is the probability that $a$ is true (has result \ityes ) when the system is in state $\rho$. If $a$ is true, then its \textit{complement} $a'=I-a\in\escript (H)$ is false. An \textit{observable} is a finite set of effects $A=\brac{A_x\colon x\in\Omega _A}$, $A_x\in\escript (H)$, that satisfies $\sum\limits _{x\in\Omega _A}A_x=I$. We call $\Omega _A$ the
\textit{outcome space} for $A$ and denote the set of observables on $H$ by $\ob (H)$. An observable is also called a \textit{positive operator-valued measure} (POVM)
\cite{dl70,hz12,nc00}. If $\rho\in\sscript (H)$ the $\rho$-\textit{probability distribution} of $A\in\ob (H)$ is gven by $\Phi _\rho ^A(x)=\trace (\rho A_x)$, $x\in\Omega _A$. The observable \textit{measured} by $\iscript\in\instr (H,H_1)$ is the unique $\iscripthat\in\ob (H)$ satisfying $\trace (\rho\iscripthat _x)=\trace\sqbrac{\iscript _x(\rho )}$ for all $\rho\in\sscript (H)$. Since $\trace\sqbrac{\iscript _x(\rho )}=\trace{\rho\iscript _x^*(I_{H_1})}$ we see that $\iscripthat _x=\iscript _x^*(I_{H_1})$ for all
$x\in\Omega _\iscript =\Omega _{\iscripthat}$. We also have the distribution
\begin{equation*}
\Phi _\rho ^{\iscripthat}(x)=\trace (\rho\iscripthat _x)=\trace\sqbrac{\iscript _x(\rho )}=\Phi _\rho ^\iscript (x)
\end{equation*}
for all $x\in\Omega _\iscript =\Omega _{\iscripthat}$. Although an instrument measures a unique observable, as we shall see, an observable is measure by many instruments.

Let $A,B\in\ob (H)$ and suppose $\iscript\in\instr (H,H_1)$ with $\iscripthat =A$. We define the $\iscript$-\textit{sequential product of} $A$ \textit{then} $B$ to be the observable $A\sqbrac{\iscript}B\in\ob (H)$ given by
\begin{equation*}
\paren{A\sqbrac{\iscript}B}_y=\sum _x\iscript _x^*(B_y)
\end{equation*}
As with instruments a \textit{bi-observable} is an observable of the form
\begin{equation*}
A=\brac{A_{xy}\colon (x,y)\in\Omega _1\times\Omega _2}
\end{equation*}
If $B\in\ob (H)$, $\iscript\in\instr (H,H_1)$, we define $B$ \textit{given} $\iscript$ to be the bi-observable $(B\mid\iscript )_{xy}=\iscript _x^*(B_y)$. We then have
$\paren{A\sqbrac{\iscript}B}_y=\sum _x(B\mid\iscript )_{xy}$. Two observables $A, B\in\ob (H)$ \textit{coexist}, denoted $A\rmco B$, if there exists a \textit{joint}
bi-observable $C\in\ob (H)$ with marginals $C_x^1=\sum\limits _yC_{xy}=A_x$ and $C_y^2=\sum _xC_{xy}=B_y$ \cite{dapt22,bkmpt22,hz12,ls22,mt122,mt222}

\begin{lem}    
\label{lem22}
{\rm{(i)}}\enspace If $\iscript\in\instr (H,H_1)$, $\jscript\in\instr (H_1,H_2)$, then $(\iscript\circ\jscript )^*=\jscript ^*\circ\iscript ^*$.
{\rm{(ii)}}\enspace If $\iscript\in\instr (H,H_1)$, $\jscript\in\instr (H_1,H_2)$ and $\iscript\rmco\jscript$, then $\iscripthat\rmco\jscripthat$.
{\rm{(iii)}}\enspace Let $\iscript\in\instr (H_,H_1)$ be a convex combination $\iscript =\sum\lambda _i\iscript _i$. Then
$\iscriptbar =\sum\limits _i\lambda _i\iscriptbar _i$ and $\paren{\sum\limits _i\lambda _i\iscript _i}^\wedge =\sum\limits _i\lambda _i\iscripthat _i$.
\end{lem}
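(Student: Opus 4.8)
The three parts are largely independent, and the plan throughout is to treat the dual of an instrument componentwise, so that each assertion reduces to a statement about the individual operations $\iscript_x,\jscript_y$ together with the identities $\iscripthat_x=\iscript_x^*(I_{H_1})$ and the operation-level rule $(\jscript_1\circ\jscript_2)^*=\jscript_2^*\circ\jscript_1^*$ already recorded above. For (i), I would observe that the $(x,y)$-component of $\iscript\circ\jscript$ is the operation $\iscript_x\circ\jscript_y\in\oscript(H,H_2)$, since $(\iscript\circ\jscript)_{xy}(\rho)=\jscript_y\paren{\iscript_x(\rho)}$. Applying the operation-level rule to each component gives $\paren{(\iscript\circ\jscript)_{xy}}^*=(\iscript_x\circ\jscript_y)^*=\jscript_y^*\circ\iscript_x^*$, which is exactly the corresponding component of $\jscript^*\circ\iscript^*$. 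Since both sides are indexed by $\Omega_\iscript\times\Omega_\jscript$ and agree componentwise, the identity follows; this step is pure bookkeeping.

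For (ii), let $\kscript\in\instr(H,H_1\otimes H_2)$ be a joint bi-instrument witnessing $\iscript\rmco\jscript$. The candidate joint bi-observable is $C:=\kscripthat$, the observable measured by $\kscript$, that is $C_{xy}=\kscript_{xy}^*(I_{H_1\otimes H_2})$; this lies in $\ob(H)$ because $\kscriptbar$ is a channel. To check the first marginal I would test against an arbitrary state: using $\trace\sqbrac{\kscript_{xy}^*(I_{H_1\otimes H_2})\rho}=\trace\sqbrac{\kscript_{xy}(\rho)}$, the compatibility relation $\iscript_x(\rho)=\sum_y\trace_{H_2}\sqbrac{\kscript_{xy}(\rho)}$, and the identity $\trace\circ\trace_{H_2}=\trace$ on $\lscript(H_1\otimes H_2)$, one obtains $\trace(\rho C_x^1)=\sum_y\trace\sqbrac{\kscript_{xy}(\rho)}=\trace\sqbrac{\iscript_x(\rho)}=\trace(\rho\iscripthat_x)$ for every $\rho\in\sscript(H)$, whence $C_x^1=\iscripthat_x$. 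The symmetric computation yields $C_y^2=\jscripthat_y$, so $C$ witnesses $\iscripthat\rmco\jscripthat$.

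For (iii), the statement $\iscriptbar=\sum_i\lambda_i\iscriptbar_i$ is immediate from interchanging the finite sums in $\iscriptbar(\rho)=\sum_x\sum_i\lambda_i\iscript_{ix}(\rho)$. For the measured observable, I would use linearity of the dual map: from $\trace\sqbrac{\paren{\sum_i\lambda_i\iscript_{ix}}(B)A}=\sum_i\lambda_i\trace\sqbrac{\iscript_{ix}(B)A}$ and uniqueness of the dual one gets $\paren{\sum_i\lambda_i\iscript_{ix}}^*=\sum_i\lambda_i\iscript_{ix}^*$, and evaluating at $I_{H_1}$ gives $\paren{\sum_i\lambda_i\iscript_i}^\wedge_x=\sum_i\lambda_i\iscripthat_{ix}$. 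Alternatively this last point follows directly from the already-established fact that the distribution of a convex combination is the convex combination of the distributions, combined with uniqueness of the measured observable.

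The only step that is more than routine bookkeeping is the marginal check in (ii): the subtlety is that compatibility of instruments is phrased with a partial trace over the discarded factor, and one must see that this partial trace collapses to the ordinary trace exactly when the resulting observable marginal is tested against a state, so that it lines up with the defining relation $\iscripthat_x=\iscript_x^*(I_{H_1})$. Once that identification is made, all three parts are short.
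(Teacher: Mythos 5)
Your proposal is correct and follows essentially the same route as the paper: part (ii) uses the identical construction $C=\kscripthat$ tested against states via $\trace\circ\trace_{H_2}=\trace$, part (iii) is the same interchange of sums and trace computation, and your componentwise reduction in (i) to the operation-level identity $(\jscript_1\circ\jscript_2)^*=\jscript_2^*\circ\jscript_1^*$ is just a repackaging of the paper's direct trace calculation. No gaps.
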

\begin{proof}
(i)\enspace For all $\rho\in\sscript (H)$, $T\in\lscript (H_2)$ we have
\begin{align*}
\trace\sqbrac{\rho\jscript ^*\circ\iscript ^*(T)}&=\trace\sqbrac{\rho\iscript ^*\paren{\jscript ^*(T)}}=\trace\sqbrac{\iscript (\rho )\jscript ^*(T)}
   =\trace\sqbrac{\jscript\paren{\iscript (\rho )}T}\\
   &=\trace\sqbrac{(\iscript\circ\jscript )(\rho )T}=\trace\sqbrac{\rho (\iscript\circ\jscript )^*(T)}
\end{align*}
and the result follows.
(ii)\enspace Since $\iscript\rmco\jscript$, there exists a bi-instrument $\kscript\in\instr (H,H_1\otimes H_2)$ such that $\kscript _{1x}^1=\iscript _x$,
$\kscript _{2y}^2=\jscript _y$. Define the bi-observable $C_{xy}\in\ob (H)$ by $C_{xy}=\kscripthat _{xy}$. Then for all $\rho\in\sscript (H)$ we obtain
\begin{align*}
\trace\paren{\rho\sum _yC_{xy}}&=\trace\paren{\rho\sum _y\kscripthat _{xy}}=\sum _y\trace\sqbrac{\kscript _{xy}(\rho )}
   =\sum _y\trace\sqbrac{\trace _{H_2}\paren{\kscript _{xy}(\rho )}}\\
   &=\trace\sqbrac{\trace _{H_2}\paren{\sum _y\kscript _{xy}(\rho )}}=\trace\sqbrac{\kscript _{1x}^1(\rho )}=\trace\sqbrac{\iscript _x(\rho )}=\trace (\rho\iscripthat _x)
\end{align*}
Hence, $\sum\limits _yC_{xy}=\iscripthat _x$ and similarly $\sum\limits _xC_{xy}=\jscripthat _y$ so $\iscripthat\rmco\jscripthat$.
(iii)\enspace We have that 
\begin{equation*}
\iscriptbar=\sum _x\iscript _x=\sum _x\sum _i\lambda _i\iscript _{ix}=\sum _i\lambda _i\sum _x\iscript _{ix}=\sum _i\lambda _i\iscriptbar _i
\end{equation*}
Moreover, for all $\rho\in\sscript (H)$ we obtain
\begin{align*}
\trace\sqbrac{\rho\paren{\sum _i\lambda _i\iscript _i}^\wedge}&=\trace\sqbrac{\sum _i\lambda _i\iscript _i(\rho )}=\sum _i\lambda _i\trace\sqbrac{\iscript _i(\rho )}\\
   &=\sum _i\lambda _i(\rho\iscripthat _i)=\trace\paren{\rho\sum _i\lambda _i\iscripthat _i}
\end{align*}
so $\paren{\sum\limits _i\lambda _i\iscript _i}^\wedge =\sum\limits _i\lambda _i\iscripthat _i$.
\end{proof}

For a bi-instrument $\kscript\in\instr (H,H_1\otimes H_2)$ we defined the marginals $\kscript _1^1$ and $\kscript _2^2$. We also have the \textit{mixed marginals}
$\kscript _{1x}^2,\kscript _{2y}^1$ given by 
\begin{align*}
\kscript _{1y}^2(\rho )&=\sum _{x\in\Omega _\iscript}\trace _{H_2}\sqbrac{\kscript _{xy}(\rho )}\\
\kscript _{2x}^1(\rho )&=\sum _{y\in\Omega _\jscript}\trace _{H_1}\sqbrac{\kscript _{xy}(\rho )}
\end{align*}

\begin{exam}{1}  
The simplest example of an instrument is a \textit{trivial instrument} $\jscript\in\instr (H,H_2)$ given by $\jscript _y(\rho )=\beta _y$ for all $\rho\in\sscript (H)$, where
$\beta _y\in\escript (H_2)$ with $\beta =\sum\beta _y\in\sscript (H_2)$. Then $\iscript\rmco\jscript$ for all $\iscript\in\instr (H,H_1)$. Indeed, let
$\kscript\in\instr (H,H_1\otimes H_2)$ be the bi-instrument $\kscript _{xy}(\rho )=\iscript _x(\rho )\otimes\beta _y$, $x\in\Omega _\iscript$. Then for all
$\rho\in\sscript (H)$ we have
\begin{align*}
\kscript _{1x}^1(\rho )&=\sum _y\trace _{H_2}\sqbrac{\kscript _{xy}(\rho )}=\sum _y\trace _{H_2}\sqbrac{\iscript _x(\rho )\otimes\beta _y}=\iscript _x(\rho )\\
\kscript _{2y}^2(\rho )&=\sum _x\trace _{H_1}\sqbrac{\kscript _{xy}(\rho )}=\sum _x\trace _{H_1}\sqbrac{\iscript _x(\rho )\otimes\beta _y}=\beta _y=\jscript _y(\rho )
\end{align*}
Hence, $\kscript$ is a joint instrument for $\iscript$ and $\jscript$ so $\iscript\rmco\jscript$. Notice that the mixed marginals of $\kscript$ become:
\begin{align*}
\kscript _{1y}^2(\rho )&=\sum _x\trace _{H_2}\sqbrac{\kscript _{xy}(\rho )}=\sum _x\trace_{H_2}\sqbrac{\iscript _x(\rho )\otimes\beta _y}
  =\trace _{H_2}\sqbrac{\,\iscriptbar (\rho )\otimes\beta _y}\\
  &=\trace (\beta _y)\iscriptbar (\rho )\\
\kscript _{2x}^1(\rho )&=\sum _y\trace _{H_1}\sqbrac{\kscript _{xy}(\rho )}=\sum _y\trace_{H_1}\sqbrac{\iscript _x(\rho )\otimes\beta _y}
   =\trace\sqbrac{\,\iscript _x(\rho )}\sum _y\beta _y\\
   &=\trace\sqbrac{\iscript _x(\rho )}\beta
\end{align*}
We also have $\jscriptbar (\rho )=\beta$ for all $\rho\in\sscript (H)$ and since
\begin{equation*}
\trace (\rho\jscripthat _y)=\trace\sqbrac{\jscript _y(\rho )}=\trace (\beta _y)=\trace\sqbrac{\rho\trace (\beta _x)I_H}
\end{equation*}
we obtain $\jscripthat _y=\trace (\beta _y)I_H$. We call $\jscripthat _y$ an \textit{identity observable} \cite{gud220}.

Let $\jscript\in\instr (H,H_1)$ be a trivial instrument with $\jscript _x(\rho )=\beta _x$, $\beta _x\in\escript (H_1)$. If $\iscript\in\instr (H_1,H_2)$ is arbitrary, we have the sequential product $\jscript\circ\iscript\in\instr (H_1,H_2)$ given by
\begin{equation*}
(\jscript\circ\iscript )_{xy}(\rho )=\iscript _y\paren{\jscript _x(\rho )}=\iscript _y(\beta _x)
\end{equation*}
We then have $\overline{\jscript\circ\iscript}(\rho )=\iscriptbar (\beta )$ for all $\rho\in\sscript (H)$. Since
\begin{equation*}
\trace\sqbrac{\rho (\jscript\circ\iscript )_{xy}^\wedge}=\trace\sqbrac{(\jscript\circ\iscript )_{xy}(\rho )}=\trace\sqbrac{\iscript _y(\beta _x)}
   =\trace\sqbrac{\rho\trace\paren{\iscript _y(\beta _x)}I_H}
\end{equation*}
we obtain $(\jscript\circ\iscript )_{xy}^\wedge =\trace\sqbrac{\iscript _y(\beta _x)}I_H$ which is an identity bi-observable. The conditional instrument
$(\iscript\mid\jscript )\in\instr (H_1,H_2)$ becomes
\begin{equation*}
(\iscript\mid\jscript )_y(\rho )=\iscript _y\paren{\,\jscriptbar (\rho )}=\iscript _y(\beta )
\end{equation*}
for all $\rho\in\sscript (H)$. If $\iscript\in\instr (H_0,H)$ is arbitrary, we have the sequential product $\iscript\circ\jscript\in\instr (H_0,H_1)$ given by
\begin{equation*}
(\iscript\circ\jscript )_{xy}(\rho )=\jscript _y\paren{\iscript _x(\rho )}=\trace\sqbrac{\iscript _x(\rho )}\beta _y
\end{equation*}
We then have $\overline{\iscript\circ\jscript}(\rho )=\beta$ for all $\rho\in\sscript (H_0)$. Since
\begin{align*}
\trace\sqbrac{\rho (\iscript\circ\jscript )_{xy}^\wedge}&=\trace\sqbrac{(\iscript\circ\jscript )_{xy}(\rho )}=\trace\sqbrac{\iscript _x(\rho )\trace (\beta _y)}\\
   &=\trace (\rho\iscripthat _x)\trace (\beta _y)=\trace\sqbrac{\rho\trace (\beta _y)\iscripthat _x}
\end{align*}
we obtain $(\iscript\circ\jscript )_{xy}^\wedge =\trace (\beta _y)\iscripthat _x$. The conditional instrument $(\jscript\mid\iscript )\in\instr (H_0,H_1)$ becomes
\begin{equation*}
(\jscript\mid\iscript )_y(\rho )=\jscript _y\paren{\,\iscriptbar (\rho )}=\beta _y=\jscript _y(\rho )
\end{equation*}
so $(\jscript\mid\iscript )=\jscript$.\hskip 21pc\qedsymbol
\end{exam}

If $A\in\ob (H_1)$, $B\in\ob (H_2)$, define the \textit{tensor product bi-observable} $A\otimes B\in\ob (H_1\otimes H_2)$ by $(A\otimes B)_{xy}=A_x\otimes B_y$
\cite{gud220}. We then have $(A\otimes B)_x^1=A_x\otimes I_{H_2}$, $(A\otimes B)_y^2=I_{H_1}\otimes B_y$ and the identity observables
$(A\otimes B)_{2x}^1=\trace (A_x)I_{H_2}$, $(A\otimes B)_{1y}^2=\trace (B_y)I_{H_1}$. Now $A\otimes B$ is a joint bi-observable for $A,B$ in the sense that
$\tfrac{1}{n_2}(A\otimes B)_{1x}^1=A_x$ and $\tfrac{1}{n_1}(A\otimes B)_{2y}^2=B_y$ where $n_2=\dim H_2$, $n_1=\dim H_1$.

If $\iscript\in\oscript (H_1,H_3)$, $\jscript\in\oscript (H_2,H_4)$, define the \textit{tensor product} $\kscript =\iscript\otimes\jscript$ to be the operation
$\kscript\in (H_1\otimes H_2,H_3\otimes H_4)$ that satisfies
\begin{equation*}
\kscript (C\otimes D)=\iscript (C)\otimes\jscript (D)
\end{equation*}
for all $C\in\lscript (H_1)$, $D\in\lscript (H_2)$. To show that $\kscript$ exists, suppose $\iscript$ and $\jscript$ have Kraus decompositions
$\iscript (C)=\sum\limits _iK_iCK_i^*$, $\jscript (D)=\sum\limits _jJ_jDJ_j^*$ where $\sum\limits _iK_i^*K_i\le I_{H_1}$, $\sum\limits _jJ_j^*J_j\le I_{H_2}$. Then for $E\in\lscript (H_1\otimes H_2)$ we define
\begin{equation*}
\kscript (E)=\sum _{i,j}K_i\otimes J_jEK_i^*\otimes J_j^*
\end{equation*}
Then
\begin{equation*}
\sum _{i,j}(K_i^*\otimes J_j^*)(K_i\otimes J_j)=\sum _{i,j}(K_i^*K_i\otimes J_j^*J_j)=\sum _iK_i^*K_i\otimes\sum _jJ_j^*J_j\le I_{H_1}\otimes I_{H_2}
\end{equation*}
and $\kscript\in\oscript (H_1\otimes H_2,H_3\otimes H_4)$ satisfies
\begin{align*}
K(C\otimes D)&=\sum _{i,j}K_i\otimes J_jC\otimes DK_i^*\otimes J_j^*=\sum _{i,j}(K_iCK_i^*)\otimes (J_jDJ_j^*)\\
   &=\sum _iK_iCK_i^*\otimes\sum _jJ_jDJ_j^*=\iscript (C)\otimes\jscript (D)
\end{align*}
for all $C\in\lscript (H_1)$, $D\in\lscript (H_2)$.

If $\iscript\in\instr (H_1,H_3)$, $J\in\instr (H_2,H_4)$ define the \textit{tensor product} $\kscript =\iscript\otimes\jscript$ to be the bi-instrument
$\kscript\in\instr (H_1\otimes H_2,H_3\otimes H_4)$ defined by $\kscript _{xy}(\rho )=\iscript _x\otimes\jscript _y (\rho )$ for all $\rho\in\sscript (H_1\otimes H_2)$. We have seen that $\kscript _{xy}\in\oscript (H_1\otimes H_2,H_3\otimes H_4)$ and $\kscriptbar$ is a channel because $\kscriptbar=\iscriptbar\otimes\jscriptbar$ and
$\iscriptbar ,\jscriptbar$ are channels. The next result shows that $\iscript\otimes\jscript$ is a type of joint instrument for $\iscript ,\jscript$.

\begin{thm}    
\label{thm23}
Let $\iscript\in\instr (H_1,H_3)$, $\jscript =\instr (H_2,H_4)$ and let $\kscript =\iscript\otimes\jscript$.
{\rm{(i)}}\enspace $\kscripthat _{xy}=\iscripthat _x\otimes\jscripthat _y$.
{\rm{(ii)}}\enspace For all $\rho\in\sscript (H_1\otimes H_2)$ we have $\kscript _{1x}^1(\rho )=\iscript _x\sqbrac{\trace _{H_2}(\rho )}$,
$\kscript _{2y}^2(\rho )=\jscript _y\sqbrac{\trace _{H_1}(\rho )}$.
{\rm{(iii)}}\enspace If $n_1=\dim H_1$, $n_2=\dim H_2$, $\rho _1\in\sscript (H_1)$, $\rho _2=\sscript (H_2)$ we have
\begin{align*}
\tfrac{1}{n_2}\kscript _{1x}^1(\rho _1\otimes I_{H_2})&=\iscript _x(\rho _1)\\
\tfrac{1}{n_1}\kscript _{2y}^2(I_{H_1}\otimes\rho _2)&=\jscript _y(\rho _2)
\end{align*}
\end{thm}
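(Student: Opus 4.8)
The plan is to reduce everything to two facts: the dual of a tensor-product operation factors as the tensor product of the duals, and the averaged operations $\iscriptbar,\jscriptbar$ are trace-preserving channels. Part (iii) then falls out of part (ii) as a one-line substitution, so the real content lies in (i) and (ii).

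For (i), I would pass to Kraus decompositions. Writing $\iscript_x(C)=\sum_iK_iCK_i^*$ and $\jscript_y(D)=\sum_jL_jDL_j^*$, the tensor-product construction recalled just before the theorem shows that the operators $K_i\otimes L_j$ are Kraus operators for $\kscript_{xy}=\iscript_x\otimes\jscript_y$. Hence $\kscripthat_{xy}=\kscript_{xy}^*(I_{H_3\otimes H_4})=\sum_{i,j}(K_i\otimes L_j)^*(K_i\otimes L_j)=\paren{\sum_iK_i^*K_i}\otimes\paren{\sum_jL_j^*L_j}=\iscripthat_x\otimes\jscripthat_y$, using $\iscripthat_x=\iscript_x^*(I_{H_3})$ and $\jscripthat_y=\jscript_y^*(I_{H_4})$. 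Equivalently, one may verify that $\iscripthat_x\otimes\jscripthat_y$ satisfies the defining relation $\trace\sqbrac{\rho(\iscripthat_x\otimes\jscripthat_y)}=\trace\sqbrac{\kscript_{xy}(\rho)}$ on product operators $\rho=C\otimes D$ and invoke uniqueness of the measured observable.

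For (ii), since $\kscript_{xy}$, the partial traces, and the summations are all linear in $\rho$, and since $\lscript(H_1\otimes H_2)$ is spanned by product operators $C\otimes D$, it suffices to verify the two identities on $\rho=C\otimes D$ and extend by linearity. Unwinding the bi-instrument bookkeeping, $\kscript_{1x}^1$ traces out the second output factor $H_4$ and then sums over $y$, so $\kscript_{1x}^1(C\otimes D)=\sum_y\trace_{H_4}\sqbrac{\iscript_x(C)\otimes\jscript_y(D)}=\iscript_x(C)\sum_y\trace\sqbrac{\jscript_y(D)}$. The key simplification is $\sum_y\trace\sqbrac{\jscript_y(D)}=\trace\sqbrac{\jscriptbar(D)}=\trace(D)$, which holds because $\jscriptbar$ is a channel, hence trace-preserving; this gives $\kscript_{1x}^1(C\otimes D)=\trace(D)\iscript_x(C)$, matching $\iscript_x\sqbrac{\trace_{H_2}(C\otimes D)}=\iscript_x\sqbrac{\trace(D)C}$. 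The identity for $\kscript_{2y}^2$ is entirely symmetric, now tracing out $H_3$ and summing over $x$, and using that $\iscriptbar$ is trace-preserving.

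Finally, (iii) is immediate from (ii): substituting $\rho=\rho_1\otimes I_{H_2}$ and using $\trace_{H_2}(\rho_1\otimes I_{H_2})=\trace(I_{H_2})\rho_1=n_2\rho_1$ gives $\kscript_{1x}^1(\rho_1\otimes I_{H_2})=\iscript_x(n_2\rho_1)=n_2\iscript_x(\rho_1)$ by linearity, and dividing by $n_2$ yields the claim; the second identity follows the same way from $\trace_{H_1}(I_{H_1}\otimes\rho_2)=n_1\rho_2$. The computations are routine; the only points demanding care are tracking which tensor factor each marginal traces out under the overlapping sub/superscript notation, and justifying the passage from arbitrary $\rho$ to product operators via linearity rather than checking product states alone.
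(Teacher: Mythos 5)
Your proposal is correct; parts (ii) and (iii) follow the paper's argument essentially verbatim (verify on product operators, extend by linearity, use that $\jscriptbar$ and $\iscriptbar$ are trace-preserving, then substitute $\rho_1\otimes I_{H_2}$), but your primary argument for (i) takes a genuinely different route. The paper proves (i) by the duality/uniqueness computation you mention only as an aside: it checks $\trace(\rho\,\kscripthat_{xy})=\trace\sqbrac{\kscript_{xy}(\rho)}=\trace\sqbrac{\iscript_x(\rho_1)}\trace\sqbrac{\jscript_y(\rho_2)}=\trace\paren{\rho\,\iscripthat_x\otimes\jscripthat_y}$ on $\rho=\rho_1\otimes\rho_2$ and extends linearly. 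Your Kraus-operator argument, $\kscript_{xy}^*(I_{H_3\otimes H_4})=\sum_{i,j}(K_i\otimes L_j)^*(K_i\otimes L_j)=\paren{\sum_iK_i^*K_i}\otimes\paren{\sum_jL_j^*L_j}$, is legitimate because the paper defines $\iscript_x\otimes\jscript_y$ precisely through the Kraus family $\brac{K_i\otimes L_j}$ and because the dual map is unique, hence independent of the chosen decomposition. What your route buys is that it actually computes the full dual $\kscript_{xy}^*=\iscript_x^*\otimes\jscript_y^*$ rather than just its value at the identity, and it avoids the density-of-product-operators step for part (i); what the paper's route buys is that it never touches a Kraus decomposition and so works verbatim from the defining property $\kscript(C\otimes D)=\iscript(C)\otimes\jscript(D)$ alone. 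A minor point in your favor: by working with general product operators $C\otimes D$ in (ii) and carrying the factor $\trace(D)$ explicitly, you make the linear extension cleaner than the paper's version, which momentarily writes $\trace_{H_4}\sqbrac{\iscript_x(\rho_1)\otimes\jscriptbar(\rho_2)}=\iscript_x(\rho_1)$, an equality that presumes $\trace(\rho_2)=1$.
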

\begin{proof}
(i)\enspace For all $\rho=\rho_1\otimes\rho _2\in\lscript (H_1\otimes H_2)$ we have
\begin{align*}
\trace (\rho\kscripthat _{xy})&=\trace\sqbrac{\kscript _{xy}(\rho )}=\trace\sqbrac{\iscript _x\otimes\jscript _y(\rho _1\otimes\rho _2)}
   =\trace\sqbrac{\iscript _x(\rho _1)\otimes\jscript _y(\rho _2)}\\
   &=\trace\sqbrac{\iscript _x(\rho _1)}\trace\sqbrac{\jscript _y(\rho _2)}=\trace (\rho _1\iscripthat _x)\trace (\rho _2\jscripthat _y)
   =\trace (\rho _1\iscripthat _x\otimes\rho _2\jscripthat _y)\\
   &=\trace (\rho _1\otimes\rho _2\iscripthat _x\otimes\jscripthat _y)=\trace (\rho\iscripthat _x\otimes\jscripthat _y)
\end{align*}
Since any $A\in\lscript (H_1\otimes H_2)$ has the form $A=\sum\limits _{i,j}B_i\otimes C_j$, $B_i\in\lscript (H_1)$, $C_j\in\lscript (H_2)$, the result holds for $\rho =A$. Hence, $\kscripthat _{xy}=\iscripthat _x\otimes\jscripthat _y$.
(ii)\enspace For all $\rho =\rho _1\otimes\rho _2\in\lscript (H_1\otimes H_2)$ we have 
\begin{align*}
\kscript _{1x}^1(\rho )&=\trace _{H_4}\sqbrac{\sum _y\kscript _{xy}(\rho )}=\trace _{H_4}\sqbrac{\sum _y\iscript _x\otimes\jscript _y(\rho _1\otimes\rho _2)}\\
   &=\trace _{H_4}\sqbrac{\sum _y\iscript _x(\rho _1)\otimes\jscript _y(\rho _2)}=\sum _y\trace _{H_4}\sqbrac{\iscript _x(\rho _1)\otimes\jscript _y(\rho _2)}\\
   &=\trace _{H_4}\sqbrac{\iscript _x(\rho _1)\otimes\jscripthat (\rho _2)}=\iscript _x(\rho _1)=\iscript _x\sqbrac{\trace _{H_2}(\rho )}
\end{align*}
As in (i) the result follows for all $\rho\in\sscript (H_1\otimes H_2)$.
(iii)\enspace Applying (i) we obtain
\begin{equation*}
\kscript _{1x}^1(\rho _1\otimes I_{H_2})=\iscript _x\sqbrac{\trace _{H_2}(\rho _1\otimes I_{H_2})}=\iscript _x\sqbrac{\trace (I_{H_2})\rho _1}=n_2\iscript _x(\rho _1)
\end{equation*}
Hence, $\tfrac{1}{n_2}\kscript _{1x}^1(\rho _1\otimes I_{H_2})=\iscript (\rho _1)$. Similarly, $\tfrac{1}{n_1}\kscript _{2y}^2(I_{H_1}\otimes\rho _2)=\jscript _y(\rho _2)$.
\end{proof}

\section{Examples of Instruments}  
Two important instruments are the Holevo and Kraus instruments. These instruments are useful for illustrating the definitions and concepts presented in Section~2. If $A\in\ob (H)$ and $\alpha =\brac{\alpha _x\colon x\in\Omega _A}\subseteq\sscript (H_1)$, the corresponding \textit{Holevo instrument}
$\hscript ^{(A,\alpha )}\in\instr (H,H_1)$ has the form $\hscript _x^{(A,\alpha )}(\rho )=\trace (\rho A_x)\alpha _x$ for all $\rho\in\sscript (H)$ \cite{gud120,hol82}. Notice
that $\hscript ^{(A,\alpha )}$ is indeed an instrument because
\begin{equation*}
\sum _x\trace (\rho A_x)=\trace\paren{\rho\sum _xA_x}=\trace (\rho )=1
\end{equation*}
for every $\rho\in\sscript (H)$ so $\sum _x\trace (\rho A_x)\alpha _x$ is a convex combination of states which is a state. Since
\begin{align*}
\trace\sqbrac{\rho\hscript _x^{(A,\alpha )*}(a)}&=\trace\sqbrac{\hscript _x^{(A,\alpha )}(\rho )a}=\trace\sqbrac{\trace (\rho A_x)\alpha _xa}\\
   &=\trace\sqbrac{\rho\trace (\alpha _xa)A_x}
\end{align*}
we have that $\hscript _x^{(A,\alpha )*}(a)=\trace (\alpha _xa)A_x$ for all $a\in\escript (H_1)$. We conclude thhat 
\begin{equation*}
(\hscript _x^{(A,\alpha )})^\wedge =\hscript _x^{(A,\alpha )*}(I_{H_1})=A_x
\end{equation*}
so $\hscript ^{(A,\alpha )\wedge}=A$. We also have $\hscriptbar ^{(A,\alpha )}(\rho )=\sum\limits _x\trace (\rho A_x)\alpha _x$ which, as we showed previously is a state.

If $\hscript ^{(A,\alpha )}\in\instr (H,H_1)$ and $\hscript ^{(B,\beta )}\in\instr (H_1,H_2)$, then their sequential product becomes
\begin{align*}
\sqbrac{\hscript ^{(A,\alpha )}\circ\hscript ^{(B,\beta )}}_{xy}(\rho )&=\hscript _y^{(B,\beta )}\sqbrac{\hscript _x^{(A,\alpha )}(\rho )}
   =\hscript _y^{(B,\beta )}\sqbrac{\trace(\rho A_x)\alpha _x}\\
   &=\trace (\rho A_x)\hscript _y^{(B,\beta )}(\alpha _x)=\trace (\rho A_x)\trace (\alpha _xB_y)\beta _y\\
   &=\trace\sqbrac{\rho\trace (\alpha _xB_y)A_x}\beta _y=\hscript _{xy}^{(C_y\beta )}(\rho )
\end{align*}
We conclude that $\hscript ^{(A,\alpha )}\circ\hscript ^{(B,\beta )}=\hscript ^{(C,\beta )}$ where $C\in\ob (H)$ is the bi-observable given by
$C_{xy}=\trace (\alpha _xB_y)A_x$. The conditioned instrument $(\hscript ^{(B,\beta )}\mid\hscript ^{(A,\alpha )}\in\instr (H,H_2)$ becomes
\begin{align*}
(\hscript ^{(B,\beta )}\mid\hscript ^{(A,\alpha )})_y(\rho )&=\hscript _y^{(B,\beta )}\sqbrac{\overline{\hscript ^{(A,\alpha )}}\rho}
   =\hscript _y^{(B\beta )}\sqbrac{\sum _x\hscript _x^{(A,\alpha )}(\rho )}\\
   &=\sum _x\hscript _y^{(B,\beta )}\sqbrac{\hscript _x^{(A,\alpha )}(\rho )}=\sum _x\hscript _{xy}^{(C,\beta )}(\rho )=\hscript _y^{(C,\beta )2}(\rho )
\end{align*}
We conclude that $(\hscript ^{(B,\beta )}\mid\hscript ^{(A,\alpha )})$ is the marginal instrument $\hscript ^{(C,\beta )2}$. We also have
\begin{align*}
(\hscript ^{(A,\alpha )}\rmt\hscript ^{(B,\beta )})_x(\rho )&=\overline{\hscript ^{(B,\beta )}}\sqbrac{\hscript _x^{(A,\alpha )}(\rho )}
   =\sum _y\hscript _y^{(B,\beta )}\sqbrac{\hscript _x^{(A,\alpha )}(\rho )}\\
   &=\sum _y\hscript _{xy}^{(C,\beta )}(\rho )=\hscript _x^{(C,\beta )1}(\rho )
\end{align*}
Hence, $(\hscript ^{(A,\alpha )}\rmt\hscript ^{(B,\beta )})$ is the marginal instrument $\hscript ^{(C,\beta )1}$. Notice that $C_x^1=A_x$ so $C^1=A$ and
\begin{equation*}
C_y^2=\sum _x\trace (\alpha _xB_y)A_x
\end{equation*}
Since $\sum\limits _x\trace (\alpha _xB_y)=1$ for every $y\in\Omega _B$, $C^2$ is a post-processing of $A$.

Let $A_{xy}\in\ob (H)$ be a bi-observable, $\alpha =\brac{\alpha _{xy}\colon (x,y)\in\Omega _A}\subseteq\sscript (H_1\otimes H_2)$ and define the Holevo
bi-instrument in $\instr (H,H_1\otimes H_2)$ by
\begin{equation*}
\hscript _{xy}^{(A,\alpha )}(\rho )=\trace (\rho A_{xy})\alpha _{xy}
\end{equation*}
The marginals become
\begin{align*}
\hscript _{xy}^{(A,\alpha )1}(\rho )&=\sum _y\hscript _{xy}^{(A,\alpha )}=\sum_y\trace (\rho A_{xy})\alpha _{xy}\\
\hscript _{xy}^{(A,\alpha )2}(\rho )&=\sum _x\hscript _{xy}^{(A,\alpha )}=\sum _x\trace (\rho A_{xy})\alpha _{xy}
\end{align*}
We then have the reduced and mixed marginals 
\begin{align*}
\hscript _{1x}^{(A,\alpha )1}(\rho )&=\sum _y\trace (\rho A_{xy})\trace _{H_2}(\alpha _{xy})\in\instr (H,H_1)\\
\hscript _{2y}^{(A,\alpha )2}(\rho )&=\sum _x\trace (\rho A_{xy})\trace _{H_1}(\alpha _{xy})\in\instr (H,H_2)\\
\hscript _{1y}^{(A,\alpha )2}(\rho )&=\sum _x\trace (\rho A_{xy})\trace _{H_2}(\alpha _{xy})\in\instr (H,H_1)\\
\hscript _{2x}^{(A,\alpha )1}(\rho )&=\sum _y\trace (\rho A_{xy})\trace _{H_1}(\alpha _{xy})\in\instr (H,H_2)
\end{align*}
We say that $\hscript ^{(A,\alpha )}$ is a \textit{product instrument} if $\alpha _{xy}=\beta _x\otimes\gamma _y$, $\beta _x\in\sscript (H_1)$,
$\gamma _y\in\sscript (H_2)$ and in this case we have
\begin{align*}
\hscript _{1x}^{(A,\alpha )1}(\rho )&=\sum _y\trace (\rho A_{xy})\beta _x\\
\hscript _{2y}^{(A,\alpha )2}(\rho )&=\sum _x\trace (\rho A_{xy})\gamma _y
\end{align*}
Notice that $\hscript _{1x}^{(A,\alpha )1}=\hscript _x^{(B,\beta )}$ where $B_x=\sum\limits _yA_{xy}=A_x^1$ and
$\hscript _{2y}^{(A,\alpha )2}=\hscript _y^{(C,\gamma )}$ where $C_y=\sum\limits _xA_{xy}=A_y^2$.

Suppose $\hscript ^{(A,\alpha )}\in\instr (H,H_1)$, $\hscript ^{(B,\beta )}\in\instr (H,H_2)$ and $\hscript ^{(A,\alpha )}$ so $\hscript ^{(B,\beta )}$. If their joint instrument is $\hscript ^{(C,\gamma )}\in\instr (H,H_1\otimes H_2)$ then for all $\rho\in\sscript (H)$ we have
\begin{align*}
\trace (\rho A_x)\alpha _x&=\hscript _x^{(A,\alpha )}(\rho )=\hscript _{1x}^{(C,\gamma )1}=\sum _y\trace (\rho C_{xy})\trace _{H_2}(\gamma _{xy})\\
\trace (\rho B_y)\beta _y&=\hscript _y^{(B,\beta )}(\rho )=\hscript _{2y}^{(C,\gamma )2}=\sum _x\trace (\rho C_{xy})\trace _{H_1}(\gamma _{xy})
\end{align*}
If $C$ is a product instrument with $\gamma _{xy}=\varepsilon _x\otimes\delta _y$ we obtain
\begin{align*}
\trace (\rho A_x)\alpha _x&=\sum _y\trace (\rho C_{xy})\varepsilon _x=\trace\paren{\rho\sum _yC_{xy}}\varepsilon _x=\trace (\rho C_x^1)\varepsilon _x\\
\trace (\rho B_y)\beta _y&=\sum _x\trace (\rho C_{xy})\delta _y=\trace\paren{\rho\sum _xC_{xy}}\delta _y=\trace (\rho C_y^2)\delta _y
\end{align*}
It follows that $\varepsilon _x=\alpha _x$, $A_x=C_x^1$ and $\beta _y=\delta _y$, $B_y=C_y^2$. Moreover, $\gamma _{xy}=\alpha _x\otimes\beta _y$.

A \textit{Kraus instrument} $\kscript\in\instr (H,H_1)$ has the form $\kscript _x(\rho )=K_x\rho K_x^*$ where $K_x\colon\lscript (H)\to\lscript (H_1)$ are linear operators satisfying $\sum\limits _xK_x^*K_x=I_H$ \cite{kra83}. We call $K_x$ the \textit{Kraus operators} for $\kscript$. Notice that $0\le K_x^*K_x\le I_H$ so
$K_x^*K_x\in\escript (H)$ for all $x\in\Omega _\kscript$. Since
\begin{equation*}
\trace\sqbrac{\kscript _x(\rho )a}=\trace (K_x\rho K_x^*a)=\trace (\rho K_x^*aK_x)
\end{equation*}
for every $a\in\lscript (H_1)$ we have $\kscript _x^*(a)=K_x^*aK_x$. It follows that the measured observable $\kscripthat\in\ob (H)$ is
\begin{equation*}
\kscripthat _x=\kscript _x^*(I_{H_1})=K_x^*K_x
\end{equation*}
for all $x\in\Omega _\kscript$. Let $\kscript\in\instr(H,H_1)$, $\jscript\in\instr (H_1,H_2)$ be Kraus instruments with operators $K_x$, $J_y$, respectively.
Then $\kscript\circ\jscript\in\instr (H,H_2)$ is the bi-instrument given by
\begin{equation*}
(\kscript\circ\jscript )_{xy}(\rho )=\jscript _y\sqbrac{\kscript _x(\rho )}=J_y(K_x\rho K_x^*)J_y^*=J_yK_x\rho (J_yK_x)^*=\lscript _{xy}(\rho )
\end{equation*}
where $\lscript _{xy}$ is the Kraus bi-instrument with Kraus operators $L_{xy}=J_yK_x$. It follows that $(\jscript\mid\kscript )\in\instr (H,H_2)$ is given by
\begin{align*}
(\jscript\mid\kscript )_y(\rho )&=\jscript _y\paren{\,\kscriptbar (\rho )}=\jscript _y\paren{\sum _xK_x\rho K_x^*}=\sum _x\sqbrac{\jscript _y(K_x\rho K_x^*)}\\
   &=\sum _x(J_yK_x\rho K_x^*J_y^*)=\sum _x\lscript _{xy}(\rho )=\lscript _y^2(\rho )
\end{align*}
We also have
\begin{align*}
(\kscript\rmt\jscript )_x(\rho )&=\jscriptbar\sqbrac{\kscript _x(\rho )}=\sum _y\jscript _y(K_x\rho K_x^*)=\sum _yJ_yK_x\rho K_x^*J_y^*\\
   &=\sum _y\lscript _{xy}(\rho )=\lscript _x^1(\rho )
\end{align*}

Let $\hscript ^{(A,\alpha )}\in\instr (H_1,H_2)$ be Holevo and $\kscript\in\instr (H,H_1)$ be an arbitrary instrument. We then have the bi-instrument $\kscript\circ\hscript ^{(A,\alpha )}\in\instr (H,H_2)$ as follows
\begin{align*}
(\kscript\circ\hscript ^{(A,\alpha )})_{xy}(\rho )&=\hscript _y^{(A,\alpha )}(\kscript _x(\rho ))=\trace\sqbrac{\kscript _x(\rho )A_y}\alpha _y\\
   &=\trace\sqbrac{\rho\kscript _x^*(A_y)}\alpha _y=\hscript _{xy}^{(B,\alpha )}(\rho )
\end{align*}
where $B\in\ob (H)$ is the bi-observable given by $B_{xy}=\kscript _x^*(A_y)$. We conclude that
\begin{equation*}
(\kscript\circ\hscript ^{(A,\alpha )})_{xy}^\wedge = B_{xy}=\kscript _x^*(A_y)
\end{equation*}
We also have
\begin{align*}
(\hscript ^{(A,\alpha )}\mid\kscript )_y(\rho )&=\hscript _y^{(A,\alpha )}(\,\kscriptbar (\rho ))=\hscript _y^{(A,\alpha )}\sqbrac{\sum _x\kscript _x(\rho )}\\
   &=\trace\sqbrac{\rho\sum _x\kscript _x^*(A_y)}\alpha _y=\trace (\rho B_y^2)\alpha _y=\hscript _y^{(B^2,\alpha )}(\rho )
\end{align*}
Hence, $(\hscript ^{(A,\alpha )}\mid\kscript )=\hscript ^{(B^2,\alpha )}$ which is Holevo. Moreover,
\begin{align*}
(\kscript\rmt\hscript ^{(A,\alpha )})_x(\rho )&=\overline{\hscript ^{(A,\alpha )}}\sqbrac{\kscript _x(\rho )}=\sum _y\trace\sqbrac{\kscript _x(\rho )A_y}\alpha _y\\
   &=\sum _y\trace\sqbrac{\rho\kscript _x^*(A_y)}\alpha _y=\sum _y\trace\sqbrac{\rho B_{xy}}\alpha _y\\
   &=\sum _y\hscript _{xy}^{(B,\alpha )}(\rho )=\hscript _x^{(B,\alpha )1}(\rho )
\end{align*}
Therefore, $(\kscript\rmt H^{(A,\alpha )})=\hscript ^{(B,\alpha )1}$ which is a marginal of a Holevo bi-instrument. We conclude that the sequential product of an arbitrary instrument then a Holevo instrument is Holevo and a Holevo instrument conditioned by an arbitrary instrument is Holevo. In particular, if $\kscript$ is Kraus with operators $K_x$, then $\kscript\circ\hscript ^{(A,\alpha )}=\hscript ^{(B,\alpha )}$ where $B_{xy}=K_x^*A_yK_x$.

In the other order, let $\hscript ^{(A,\alpha )}\in\instr (H,H_1)$ and $\kscript\in\instr (H_1,H_2)$ be arbitrary. Then $\hscript ^{(A,\alpha )}\circ\kscript\in\instr (H,H_2)$ is the bi-instrument given by
\begin{equation*}
(\hscript ^{(A,\alpha )}\circ\kscript )_{xy}(\rho )=\kscript _y\sqbrac{\hscript _x^{(A,\alpha )}(\rho )}=\kscript _y\sqbrac{\trace (\rho A_x)\alpha _x}
   =\trace (\rho A_x)\kscript _y(\alpha _x)
\end{equation*}
If $\kscript _y(\alpha _x)\ne 0$, let $\beta _{xy}\in\sscript (H_2)$ be defined by $\beta _{xy}=\kscript _y(\alpha _x)/\trace\sqbrac{\kscript _y(\alpha _y)}$ and define the bi-observable $B_{xy}=\trace\sqbrac{\kscript _y(\alpha _x)}A_x$. We then obtain
\begin{equation*}
(\hscript ^{(A,\alpha )}\circ\kscript )_{xy}(\rho )=\trace\sqbrac{\kscript _y(\alpha _x)}\trace (\rho A_x)\beta _{xy}=\trace (\rho B_{xy})\beta _{xy}
   =\hscript _{xy}^{(B,\beta )}(\rho )
\end{equation*}
which is a Holevo bi-instrument. Hence,
\begin{equation*}
(\hscript ^{(A,\alpha )}\circ\kscript )_{xy}^\wedge =B_{xy}=\trace\sqbrac{\kscript _y(\alpha _x)}A_x
\end{equation*}
We also have
\begin{align*}
(\kscript\mid\hscript ^{(A,\alpha )})_y(\rho )&=\kscript _y(\overline{\hscript ^{(A,\alpha )}}(\rho ))=\kscript _y\sqbrac{\sum _x\hscript _x^{(A,\alpha )}(\rho )}
   =\sum _x\kscript _y\sqbrac{\trace (\rho A_x)\alpha _x}\\
   &=\sum _x\trace (\rho A_x)\kscript _y(\alpha _x)=\sum _x\trace (\rho A_x)\trace\sqbrac{\kscript _y(\alpha _x)}\beta _{xy}\\
   &=\sum _x\trace (\rho B_{xy})\beta _{xy}=\sum _x\hscript _{xy}^{(B,\beta )}(\rho )=\hscript _y^{(B,\beta )2}(\rho )
\end{align*}
Therefore, $(\kscript\mid\hscript ^{(A,\alpha )})=\hscript ^{(B,\beta )2}$ which is a marginal of a Holevo bi-instrument. Moreover,
\begin{align*}
(\hscript ^{(A,\alpha )}\rmt\kscript )_x(\rho )&=\kscriptbar\sqbrac{\hscript _x^{(A,\alpha )}(\rho )}=\sum _y\kscript _y\sqbrac{\trace (\rho A_x)\alpha _x}
   =\trace (\rho A_x)\sum _y\kscript _y(\alpha _x)\\
   &=\trace (\rho A_x)\sum _y\trace\sqbrac{\kscript _y(\alpha _x)}\beta _{xy}=\sum _y\trace\brac{\rho\trace\sqbrac{\kscript _y(\alpha _x)}A_x}\beta _{xy}\\
   &=\sum _y\trace (\rho B_{xy})\beta _{xy}=\sum _y\hscript _{xy}^{(B,\beta )}(\rho )=\hscript _x^{(B,\beta )1}(\rho )
\end{align*}
Hence, $(\hscript ^{(A,\alpha )}\rmt\kscript )=\hscript ^{(B,\beta )1}$ which is a marginal of a Holevo bi-instrument.

We now give an example of a convex tensor product of two instruments. Let $\iscript\in\instr (H,H_1)$, $\jscript\in\instr (H,H_2)$, $\alpha _x\in\sscript (H_1)$,
$\beta _y\in\sscript (H_2)$, $\lambda _y,\mu _x\in\sqbrac{0,1}$ with $\sum _y\lambda _y+\sum _x\mu _x=1$ and define $\lambda =\sum\limits _y\lambda _y$,
$\mu =\sum\limits _x\mu _x$. Define the bi-instrument $\kscript\in\instr (H,H_1\otimes H_2)$ by
\begin{equation*}
\kscript _{xy}(\rho )=\lambda _y\iscript _x(\rho )\otimes\beta _y+\mu _x\alpha _x\otimes\jscript _y(\rho )
\end{equation*}
Notice that $\kscript$ is indeed an instrument because
\begin{align*}
\trace\sqbrac{\sum _{x,y}\kscript _{xy}(\rho )}&=\sum _{x,y}\trace\sqbrac{\kscript _{xy}(\rho )}
   =\sum _{x,y}\brac{\lambda _y\trace\sqbrac{\iscript _x(\rho )}+\mu _x\trace\sqbrac{\jscript _y(\rho )}}\\
   &=\sum _y\lambda _y\trace\sqbrac{\,\iscriptbar (\rho )}+\sum _x\mu _x\trace\sqbrac{\,\jscriptbar (\rho )}=\sum _y\lambda _y+\sum _x\mu _x=1
\end{align*}
The marginals $\kscript ^1\in\instr (H,H_1\otimes H_2)$, $\kscript ^2\in\instr (H,H_1\otimes H_2)$ are given by
\begin{align*}(
\kscript _x^1(\rho )&=\sum _y\kscript _{xy}(\rho )=\iscript _x(\rho )\otimes\sum _y\lambda _y\beta _y+\mu _x\alpha _x\otimes\jscriptbar (\rho )\\
\kscript _y^2(\rho )&=\sum _x\kscript _{xy}(\rho )=\iscriptbar (\rho )\otimes\lambda _y\beta _y+\sum _x\mu _x\alpha _x\otimes\jscript _y (\rho )
\end{align*}
The reduced instruments $K_1\in\instr (H,H_1)$, $\kscript _2\in\instr (H,H_2)$ become
\begin{align*}
\kscript _{1xy}(\rho )&=\trace _{H_2}\sqbrac{\kscript _{xy}(\rho )}=\lambda _y\iscript _x(\rho )+\mu _x\trace\sqbrac{\jscript _y(\rho )}\alpha _x\\
\kscript _{2xy}(\rho )&=\trace _{H_1}\sqbrac{\kscript _{xy}(\rho )}=\lambda _y\trace\sqbrac{\jscript _x(\rho )}\beta _y+\mu _x\jscript _y(\rho )
\end{align*}
The reduced marginals $K_1^1\in\instr (H,H_1)$, $\kscript _2^2\in\instr (H,H_2)$ $\kscript _1^2\in\instr (H,H_1)$, $\kscript _2^1\in\instr (H,H_2)$ are given by
\begin{align*}
\kscript _{1x}^1(\rho )&=\sum _y\kscript _{1xy}(\rho )=\lambda\iscript _x(\rho )+\mu _x\alpha _x\\
\kscript _{2y}^2(\rho )&=\sum _x\kscript _{2xy}(\rho )=\lambda _y\beta _y+\mu\jscript _y(\rho )\\
\kscript _{1y}^2(\rho )&=\sum _x\kscript _{1xy}(\rho )=\lambda _y\iscriptbar (\rho )+\trace\sqbrac{\jscript _y(\rho )}\sum _x\mu _x\alpha _x\\
\kscript _{2x}^1(\rho )&=\sum _y\kscript _{2xy}(\rho )=\trace\sqbrac{\iscript _x(\rho )}\sum _y\lambda _y\beta _y+\mu _x\jscriptbar (\rho )						\end{align*}
We have that $\kscript _1^1\rmco\kscript _2^2$ and $\kscript _1^2\rmco\kscript _2^1$. The measured observables are gotten as follows:
\begin{align*}
\trace (\rho\kscripthat _{xy})&=\trace\sqbrac{\,\kscript _{xy}(\rho )}=\lambda _y\trace\sqbrac{\iscript _x(\rho )}+\mu _x\trace\sqbrac{\jscript _y(\rho )}\\
   &=\lambda _y\trace (\rho\iscripthat _x)+\mu _x\trace (\rho\jscripthat _y)
\end{align*}
Hence, $\kscripthat _{xy}=\lambda _y\iscripthat _x+\mu _x\jscripthat _y$. Therefore, $\kscripthat _x^1=\lambda\iscripthat _x+\mu _xI_H$ and
$\kscripthat _y^2=\lambda _yI_H+\mu\jscripthat _y$ coexist with joint observable $\kscripthat _{xy}$. We also have
\begin{align*}
\trace (\rho\kscripthat _{1x}^1)&=\trace\sqbrac{\kscript _{1x}^1(\rho )}=\lambda\trace\sqbrac{\iscript _x(\rho )}+\mu _x
   =\lambda\trace (\rho\iscripthat _x)+\mu _x\trace (\rho )\\
   &=\trace\sqbrac{\rho (\lambda\iscripthat _x+\mu _xI_H)}
\end{align*}
Hence, $\kscripthat _{1x}^1=\lambda\iscripthat _x+\mu _xI_H=\kscripthat _x^1$ and similarly $\kscripthat _{2y}^2=\lambda _yI_H+\mu\jscripthat _y=\kscripthat _y^2$. Moreover,
\begin{equation*}
\trace (\rho\kscripthat _{1y}^2)=\trace\sqbrac{\kscript _{1y}^2 (\rho )}=\lambda _y+\mu\trace\sqbrac{\jscript _y(\rho )}
   =\trace\sqbrac{\rho (\mu\jscripthat _y+\lambda _yI_H)}
\end{equation*}
Therefore,
\begin{equation*}
\kscripthat _{1y}^2=\mu\jscripthat _y+\lambda _uI_H=\kscripthat _{2y}^2=\kscripthat _y^2
\end{equation*}
and similarly $\kscripthat _{2x}^1=\kscripthat _{1x}^1=\kscripthat _x^1$.

Let us consider the special case in which $\iscript =\hscript ^{(A,\gamma )}$ and $\jscript =\hscript ^{(B,\delta )}$. We then obtain
\begin{align*}
\kscript _{xy}(\rho )&=\lambda _y\hscript _x^{(A,\gamma )}(\rho )\otimes\beta _y+\mu _x\alpha _x\otimes\hscript _y^{(B,\delta )}(\rho )\\
   &=\lambda _y\trace (\rho A_x)\gamma _x\otimes\beta _y+\mu _x\alpha _x\otimes\trace (\rho B_y)\gamma _y
\end{align*}
In this case, we have
\begin{align*}
\kscript _{1xy}(\rho )&=\lambda _y\trace (\rho A_x)\gamma _x+\mu _x\trace (\rho B_y)\alpha _x\\
\kscript _{2xy}(\rho )&=\lambda _y\trace (\rho A_x)\beta _y+\mu _x\trace (\rho B_y)\delta _y
\end{align*}
We also obtain $\kscripthat _{xy}=\lambda _yA_x+\mu _xB_y$, $\kscripthat _x^1=\lambda A_x+\mu _xI_H$, $\kscripthat _y^2=\lambda _yI_H+\mu B_y$.

\section{Results}  
Our first result show that a convex combination of Holevo instruments with the same base Hilbert space, outcome space and states is Holevo. Moreover, a weaken form of the converse holds.

\begin{thm}    
\label{thm41}
{\rm{(i)}}\enspace Let $\hscript ^{(A_i,\alpha )}$, $i=1,2,\ldots ,n$, be Holevo instruments in $\instr (H,H_1)$ with the same outcome space $\Omega$ and states
$\alpha =\brac{\alpha _x\colon x\in\Omega}$. Then a convex combination $\sum\limits _{i=1}^n\lambda _i\hscript ^{(A_i,\alpha )}$ is Holevo and
\begin{equation*}
\sum _{i=1}^n\lambda _i\hscript ^{(A_i,\alpha )}=\hscript ^{(\sum\lambda _iA_i,\alpha )}
\end{equation*}
{\rm{(ii)}}\enspace If $\hscript ^{(A_i,\alpha _i)}\in\instr (H,H_1)$ with the same outcomes space $\Omega$ and if
\begin{equation*}
\sum _{i=1}^n\lambda _i\hscript ^{(A_i,\alpha _i)}=\hscript ^{(B,\beta )}
\end{equation*}
then $B=\sum\lambda _iA_i$ and
\begin{equation}                
\label{eq41}
\beta _x=\frac{1}{\sum\limits _i\lambda _i\trace (A_{ix})}\sum _i\lambda _i\trace (A_{ix})\alpha _{ix}
\end{equation}
for all $x\in\Omega$.
\end{thm}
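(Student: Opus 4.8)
The plan for part (i) is a direct computation from the definition of a convex combination together with the defining formula $\hscript_x^{(A,\alpha)}(\rho)=\trace(\rho A_x)\alpha_x$. Writing out $\paren{\sum_i\lambda_i\hscript^{(A_i,\alpha)}}_x(\rho)=\sum_i\lambda_i\trace(\rho A_{ix})\alpha_x$, I would exploit the fact that the state $\alpha_x$ carries no dependence on $i$, so it factors out of the sum; linearity of the trace then collapses $\sum_i\lambda_i\trace(\rho A_{ix})$ into $\trace\paren{\rho\sum_i\lambda_i A_{ix}}$, which is exactly $\hscript_x^{(\sum\lambda_i A_i,\alpha)}(\rho)$. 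A short preliminary check establishes that $\sum_i\lambda_i A_i$ is a genuine observable: each component $\sum_i\lambda_i A_{ix}$ is a convex combination of effects, hence an effect, and $\sum_x\sum_i\lambda_i A_{ix}=\sum_i\lambda_i I=I$.

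For part (ii) I would start from the hypothesis $\sum_i\lambda_i\hscript^{(A_i,\alpha_i)}=\hscript^{(B,\beta)}$, which as an equality of instruments means the two operations agree outcome-by-outcome, and, by linearity, on every state; thus for all $\rho\in\sscript(H)$ and every $x\in\Omega$,
\begin{equation*}
\sum_i\lambda_i\trace(\rho A_{ix})\alpha_{ix}=\trace(\rho B_x)\beta_x.
\end{equation*}
To isolate $B$, I would apply the trace over $H_1$ to both sides. Since each $\alpha_{ix}$ and each $\beta_x$ is a state and hence has trace $1$, this collapses to $\sum_i\lambda_i\trace(\rho A_{ix})=\trace(\rho B_x)$, i.e. $\trace\paren{\rho\sum_i\lambda_i A_{ix}}=\trace(\rho B_x)$ for all $\rho\in\sscript(H)$. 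Nondegeneracy of the trace pairing (states span $\lscript(H)$ in finite dimension), the same principle used above to define the measured observable, then forces $B_x=\sum_i\lambda_i A_{ix}$, that is $B=\sum_i\lambda_i A_i$.

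To recover the formula for $\beta_x$, I would substitute this expression for $B_x$ back into the displayed identity and then evaluate at the maximally mixed state $\rho=I_H/n$, where $n=\dim H$. This substitution replaces each $\trace(\rho A_{ix})$ by $\tfrac{1}{n}\trace(A_{ix})$; the common factor $1/n$ cancels on both sides, leaving $\sum_i\lambda_i\trace(A_{ix})\alpha_{ix}=\paren{\sum_i\lambda_i\trace(A_{ix})}\beta_x$, and dividing by the scalar $\sum_i\lambda_i\trace(A_{ix})$ yields exactly \eqref{eq41}.

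There is no deep obstacle here; the one move that is not purely mechanical is recognizing that the weights $\trace(A_{ix})$ appearing in \eqref{eq41}---as opposed to the state-dependent $\trace(\rho A_{ix})$---are produced precisely by testing the operator identity against the maximally mixed state. The only point demanding care is the implicit assumption that the denominator $\sum_i\lambda_i\trace(A_{ix})$ is nonzero, which is needed for $\beta_x$ to be well defined; since this denominator equals $\trace(B_x)$ and $B_x$ is positive, it vanishes exactly when $B_x=0$, so \eqref{eq41} holds whenever outcome $x$ occurs with positive probability.
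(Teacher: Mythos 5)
Your proposal is correct and follows essentially the same route as the paper's own proof: direct computation for (i), and for (ii) taking the trace to identify $B_x=\sum_i\lambda_iA_{ix}$, substituting back, and evaluating at $\rho=I_H/n$ to obtain \eqref{eq41}. Your added remarks (that $\sum_i\lambda_iA_i$ is a genuine observable and that the denominator in \eqref{eq41} is nonzero unless $B_x=0$) are points the paper leaves implicit, but they do not change the argument.
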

\begin{proof}
(i)\enspace For all $x\in\Omega$, we obtain
\begin{align*}
\sum _i\lambda _i\hscript _x^{(A_i,\alpha )}(\rho )&=\sum _i\lambda _i\trace (\rho A_{ix})\alpha _x=\trace\sqbrac{\rho\paren{\sum _i\lambda _iA_i}_x}\alpha _x\\
   &=\hscript _x^{\paren{\sum\lambda _iA_i,\alpha}}(\rho )
\end{align*}
and the result follows.
(ii)\enspace For all $\rho\in\sscript (H)$ and $x\in\Omega$ we have
\begin{equation}                
\label{eq42}
\trace (\rho B_x)\beta _x=\hscript _x^{(B,\beta )}(\rho )=\sum _i\lambda _i\hscript _x^{(A_i,\alpha _i)}(\rho )=\sum _i\lambda _i\trace (\rho A_{ix})\alpha _{ix}
\end{equation}
Taking the trace of \eqref{eq42} gives
\begin{equation*}
\trace (\rho B_x)=\sum _i\lambda _i\trace (\rho A_{ix})=\trace\paren{\rho\sum _i\lambda _iA_{ix}}
\end{equation*}
Hence, $B_x=\sum\limits _i\lambda _iA_{ix}$ for all $x\in\Omega$ and we conclude that $B=\sum\limits _i\lambda _iA_i$. Substituting $B$ into \eqref{eq42} gives
\begin{equation*}
\sum _i\lambda _i\trace (\rho A_{ix})\beta _x=\sum _i\lambda _i\trace (\rho A_{ix})\alpha _{ix}
\end{equation*}
so that
\begin{equation*}
\beta _x=\frac{1}{\sum\limits _i\lambda _i\trace (\rho A_{ix})}\sum _i\lambda _i\trace (\rho A_{ix})\alpha _{ix}
\end{equation*}
for all $x\in\Omega$, $\rho\in\sscript (H)$. Letting $\rho =I/n$ where $n=\dim H$, we conclude that \eqref{eq41} holds.
\end{proof}

We have seen that a convex combination of Holevo instruments $\hscript ^{(A_i,\alpha )}$ is Holevo. We now show that a general convex combination of Holevo instruments $\hscript ^{(A_i,\alpha _i)}$ need not be Holevo.

\begin{exam}{2}  
Let $\hscript ^{(A,\alpha )}$, $\hscript ^{(B,\beta )}\in\instr (\complex ^2)$ be Holevo instruments with the same outcome space $\Omega =\brac{x,y}$ and let
$A_x=B_y=\ket{\phi}\bra{\phi}$ where $\phi\in\complex ^2$ with $\doubleab{\phi}=1$. Also, assume that $\alpha _x\ne\beta _x$ and
\begin{equation*}
\tfrac{1}{2}\hscript ^{(A,\alpha )}+\tfrac{1}{2}\hscript ^{(B,\beta )}=\hscript ^{(C,\gamma )}
\end{equation*}
It follows from Theorem~\ref{thm41}(ii) that $C=\tfrac{1}{2}A+\tfrac{1}{2}B$ so
\begin{equation*}
C_x=\tfrac{1}{2}A_x+\tfrac{1}{2}B_x=\tfrac{1}{2}I=C_y
\end{equation*}
Also from Theorem~\ref{thm41}(ii) we obtain $\gamma _x=\tfrac{1}{2}(\alpha _x+\beta _x)$. Since
\begin{equation*}
\tfrac{1}{2}\trace (\rho A_x)\alpha _x+\tfrac{1}{2}\trace (\rho B_x)\beta _x=\trace (\rho C_x)\gamma _x
\end{equation*}
for all $\rho\in\sscript (\complex ^2)$, letting $\rho=A_x$ we have $\alpha _x=\gamma _x=\tfrac{1}{2}(\alpha _x+\beta _x)$. But then $\alpha _x=\beta _x$ which is a contradiction. Hence, $\tfrac{1}{2}\hscript ^{(A,\alpha )}+\tfrac{1}{2}\hscript ^{(B,\beta )}$ is not Holevo. This also shows that the converse of Theorem~\ref{thm41}(ii) does not hold.\hfill\qedsymbol
\end{exam}

\begin{exam}{3}  
This example shows that a convex combination of Kraus instruments need not be Kraus. Let $\brac{\phi _1,\phi _2}$ be an orthonormal basis for $\complex ^2$, let $K_x,K_y$ be the projection $K_x=\ket{\phi _1}\bra{\phi _1}$, $K_y=\ket{\phi _2}\bra{\phi _2}$ and let $J_x=K_y$, $J_y=K_x$. Define the Kraus instruments
$\kscript ,\jscript\in\instr (\complex ^2)$ with operators $\brac{K_x,K_y}$, $\brac{J_x,J_y}$, respectively. Suppose $\lscript\in\instr (\complex ^2)$ is a Kraus instrument with outcome space $\Omega =\brac{x,y}$, operators $\brac{L_x,L_y}$ so that $L_x^*L_x+L_y^*L_y=I$ and $\lscript =\tfrac{1}{2}\kscript +\tfrac{1}{2}\jscript$. We then obtain
\begin{equation*}
L_x\rho L_x^*=\lscript _x(\rho )=\tfrac{1}{2}\kscript _y (\rho )+\frac{1}{2}\jscript _x(\rho )=\tfrac{1}{2}K_x\rho K_y+\tfrac{1}{2}J_x\rho J_x
\end{equation*}
for all $\rho\in\sscript (\complex ^2)$. Letting $\rho =I/2$ we have
\begin{equation*}
L_xL_x^*=\tfrac{1}{2}K_x+\tfrac{1}{2}J_x=\tfrac{1}{2}I
\end{equation*}
and it follows that $\sqrt{2}L_x$ is a unitary operator $U$. Hence, for all $\rho\in\sscript (\complex ^2)$ we have
\begin{equation*}
U\rho U^*=K_x\rho K_x+J_x\rho J_x
\end{equation*}
Therefore,
\begin{equation*}
K_xU\rho U^*=K_x\rho K_x=U\rho U^*K_x
\end{equation*}
We conclude that $K_x$ commutes with every $\rho\in\sscript (H)$. Hence, $K_x=\lambda _xI$, $\lambda _x\in\sqbrac{0,1}$ which is a contradiction.
\hfill\qedsymbol
\end{exam}

\begin{lem}    
\label{lem42}
If $\jscript\in\instr (H,H_1)$ is a post-processing of a Holevo instrument $\iscript\in\instr (H,H_1)$, then $\jscript$ is Holevo.
\end{lem}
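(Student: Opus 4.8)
The plan is to feed the Holevo form of $\iscript$ through the post-processing and then try to recognize the outcome as a Holevo instrument, reading off both the observable it measures and its output states. Write $\iscript =\hscript ^{(A,\alpha )}$, so that $\iscript _x(\rho )=\trace (\rho A_x)\alpha _x$, and let $\jscript _z=\sum _x\lambda _{xz}\iscript _x$ be the given post-processing. Substituting gives
\begin{equation*}
\jscript _z(\rho )=\sum _x\lambda _{xz}\trace (\rho A_x)\alpha _x
\end{equation*}
for every $\rho\in\sscript (H)$, and this single identity is what I would analyze throughout.

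First I would pin down the observable that $\jscript$ is forced to measure, since a Holevo instrument $\hscript ^{(B,\gamma )}$ satisfies $\hscript ^{(B,\gamma )\wedge}=B$. Taking the trace of the displayed identity and using $\trace (\alpha _x)=1$ yields $\trace\sqbrac{\jscript _z(\rho )}=\trace\paren{\rho\sum _x\lambda _{xz}A_x}$, which singles out the candidate $B_z=\sum _x\lambda _{xz}A_x$. A quick check using $\sum _z\lambda _{xz}=1$ shows $\sum _z B_z=\sum _x A_x=I$, so $B=\brac{B_z}\in\ob (H)$ is a genuine observable, and by uniqueness it is $\jscripthat$.

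The decisive step is to exhibit fixed output states $\gamma _z\in\sscript (H_1)$ with $\jscript _z(\rho )=\trace (\rho B_z)\gamma _z$ for all $\rho$, and I expect this to be the main obstacle. The difficulty is that the expression $\sum _x\lambda _{xz}\trace (\rho A_x)\alpha _x$ carries its $\rho$-dependence through the separate scalars $\trace (\rho A_x)$, whereas the Holevo form demands that all of this dependence be funneled into the single scalar $\trace (\rho B_z)$ multiplying one $\rho$-independent state. The plan is therefore to show that, for each fixed $z$, the states $\alpha _x$ entering with nonzero weight $\lambda _{xz}$ combine into a common output state; the cleanest route is to verify that $\sum _x\lambda _{xz}\trace (\rho A_x)\alpha _x$ is proportional to $\trace (\rho B_z)$ with a $\rho$-independent proportionality factor, which is exactly the situation guaranteed when the output states of $\iscript$ coincide, in the spirit of Theorem~\ref{thm41}(i). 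Granting that proportionality, one defines $\gamma _z$ as the resulting normalized state, reads off $\jscript _z=\hscript ^{(B,\gamma )}_z$, and concludes that $\jscript$ is Holevo.
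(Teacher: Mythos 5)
Your proposal stalls exactly where the proof has to do its real work. You correctly identify the measured observable $B_z=\sum_x\lambda_{xz}A_x$, but the existence of $\rho$-independent output states $\gamma_z$ with $\jscript_z(\rho)=\trace(\rho B_z)\gamma_z$ is only ``granted,'' not established, and your own diagnosis of why it is hard is accurate: the identity $\sum_x\lambda_{xz}\trace(\rho A_x)\alpha_x=\trace\paren{\rho\sum_x\lambda_{xz}A_x}\gamma_z$ with fixed $\gamma_z$ holds only when the states $\alpha_x$ that contribute to outcome $z$ (those with $\lambda_{xz}\ne0$) coincide. Without that, the proportionality you hope to ``verify'' is simply false. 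Concretely: take $H=H_1=\complex^2$, an orthonormal basis $\brac{\phi_1,\phi_2}$, $A_x=\alpha_x=\ket{\phi_x}\bra{\phi_x}$, and post-process to a single outcome $z$ with $\lambda_{1z}=\lambda_{2z}=1$ (this satisfies $\sum_z\lambda_{xz}=1$). Then $\jscript_z(\rho)=\sum_x\trace(\rho A_x)\alpha_x$ is the dephasing channel, which genuinely depends on $\rho$, whereas a one-outcome Holevo instrument must be the constant map $\rho\mapsto\gamma_z$. So the gap in your argument is not a missing computation; it is an obstruction, and the lemma fails in the stated generality.

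For comparison, the paper's proof performs precisely the manipulation you were wary of: it rewrites $\sum_x\lambda_{xy}\trace(\rho A_x)\alpha_x$ as $\trace\paren{\rho\sum_x\lambda_{xy}A_x}\alpha_x$, pulling a single $\alpha_x$ (with a dangling index $x$) out of a sum over $x$. That step is legitimate only under the additional hypothesis that the $\alpha_x$ are equal on each fiber $\brac{x\colon\lambda_{xz}\ne0}$ --- for instance when all output states of $\iscript$ coincide, as in the situation of Theorem~\ref{thm41}(i) --- and in that restricted setting both your outline and the paper's computation go through with $\gamma_z$ the common state. As written, neither argument proves the lemma as stated; yours at least flags the problematic step rather than silently executing it. If you want a correct statement, add the hypothesis that $\alpha_x$ is constant in $x$ (or constant on each fiber of the post-processing), and then your plan closes immediately.
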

\begin{proof}
Suppose $\iscript =\hscript ^{(A,\alpha )}$ and $\jscript$ is a post-processing of $\iscript$. Then there exist $\lambda _{xy}\in\sqbrac{0,1}$ with
$\sum\limits _y\lambda _{xy}=1$ for all $x\in\Omega _\iscript$ such that
\begin{align*}
\jscript _y(\rho )&=\sum _x\lambda _{xy}\iscript _x(\rho )=\sum _x\lambda _{xy}\hscript _x^{(A,\alpha )}(\rho )=\sum _x\lambda _{xy}\trace (\rho A_x)\alpha _x\\
   &=\trace\paren{\rho\sum _x\lambda _{xy}A_x}\alpha _x=\hscript _y^{\paren{\sum\limits _x\lambda _{xy}A_x,\alpha}}(\rho )
\end{align*}
Hence, $\jscript =\hscript ^{(B,\alpha )}$ is Holevo with $B_y=\sum\limits _x\lambda _{xy}A_x$ a post-processing of $A$.
\end{proof}

We conjecture that Lemma~\ref{lem42} does not hold for Kraus instruments but have not found a counterexample.

\begin{lem}    
\label{lem43}
If $\iscript\in\instr (H,H_1)$, $\jscript\in\instr (H_1,H_2)$, $\kscript\in (H_0,H)$ and $\iscript\rmco\jscript$, then $(\iscript\mid\kscript )\rmco (\jscript\mid\kscript )$. If
$\lscript$ is a joint instrument for $\iscript$ and $\jscript$, then $\mscript =\kscriptbar\circ\lscript$ is a joint instrument for $(\iscript\mid\kscript )$ and
$(\jscript\mid\kscript )$.
\end{lem}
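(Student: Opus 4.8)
The plan is to produce a joint bi-instrument for $(\iscript\mid\kscript)$ and $(\jscript\mid\kscript)$ directly, by transporting the given joint bi-instrument $\lscript$ for $\iscript,\jscript$ through the channel $\kscriptbar$. First I would record the objects involved. Since $\kscript\in\instr(H_0,H)$, conditioning sends $\iscript,\jscript$ to $(\iscript\mid\kscript)_x(\rho)=\iscript_x\paren{\,\kscriptbar(\rho)}$ and $(\jscript\mid\kscript)_y(\rho)=\jscript_y\paren{\,\kscriptbar(\rho)}$ for $\rho\in\sscript(H_0)$, these being instruments with outcome spaces $\Omega_\iscript$ and $\Omega_\jscript$. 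By hypothesis the joint bi-instrument $\lscript\in\instr(H,H_1\otimes H_2)$ satisfies $\lscript_{1x}^1(\sigma)=\iscript_x(\sigma)$ and $\lscript_{2y}^2(\sigma)=\jscript_y(\sigma)$ for all $\sigma\in\sscript(H)$.

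Next I would set $\mscript=\kscriptbar\circ\lscript$, that is $\mscript_{xy}(\rho)=\lscript_{xy}\paren{\,\kscriptbar(\rho)}$, and confirm $\mscript\in\instr(H_0,H_1\otimes H_2)$ is a bona fide bi-instrument on the product outcome space $\Omega_\iscript\times\Omega_\jscript$. Summing over all outcomes gives $\overline{\mscript}(\rho)=\sum_{xy}\lscript_{xy}\paren{\,\kscriptbar(\rho)}=\lscriptbar\paren{\,\kscriptbar(\rho)}$, the sequential composition of the channels $\kscriptbar$ and $\lscriptbar$, which is again a channel; so $\mscript$ is an instrument.

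The substance of the argument is the marginal computation, and the one feature that makes it go through cleanly is that $\kscriptbar$ acts on the input space $H$ whereas the partial traces $\trace_{H_1},\trace_{H_2}$ defining the output marginals act on $H_1\otimes H_2$, so they do not interact. Thus
\begin{equation*}
\mscript_{1x}^1(\rho)=\sum_y\trace_{H_2}\sqbrac{\mscript_{xy}(\rho)}=\sum_y\trace_{H_2}\sqbrac{\lscript_{xy}\paren{\,\kscriptbar(\rho)}}=\lscript_{1x}^1\paren{\,\kscriptbar(\rho)}.
\end{equation*}
The only point requiring care is that the joint-instrument identity for $\lscript$ holds for states in $\sscript(H)$, so I must note that $\kscriptbar(\rho)\in\sscript(H)$ whenever $\rho\in\sscript(H_0)$ because $\kscriptbar$ is a channel; this legitimizes the substitution and yields $\lscript_{1x}^1\paren{\,\kscriptbar(\rho)}=\iscript_x\paren{\,\kscriptbar(\rho)}=(\iscript\mid\kscript)_x(\rho)$. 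The symmetric computation with $\trace_{H_1}$ and the sum over $x$ gives $\mscript_{2y}^2(\rho)=\jscript_y\paren{\,\kscriptbar(\rho)}=(\jscript\mid\kscript)_y(\rho)$. Hence $\mscript=\kscriptbar\circ\lscript$ is a joint bi-instrument for $(\iscript\mid\kscript)$ and $(\jscript\mid\kscript)$, proving both assertions at once. I do not anticipate a genuine obstacle here: the entire content is that precomposition with the channel $\kscriptbar$ commutes with the formation of output-side marginals, and the given coexistence of $\iscript,\jscript$ supplies exactly the identities needed after that substitution.
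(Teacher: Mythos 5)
Your proposal is correct and follows essentially the same route as the paper: define $\mscript_{xy}(\rho)=\lscript_{xy}\paren{\,\kscriptbar(\rho)}$ and check that its reduced marginals are $(\iscript\mid\kscript)_x$ and $(\jscript\mid\kscript)_y$ by pulling the sum and partial trace past the precomposition with $\kscriptbar$. Your added remarks (that $\overline{\mscript}$ is a channel and that $\kscriptbar(\rho)\in\sscript(H)$) are small explicit verifications the paper leaves implicit.
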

\begin{proof}
Let $\lscript\in\instr (H,H_1\otimes H_2)$ be a joint bi-instrument for $\iscript$, $\jscript$. Define $\mscript\in\instr (H_0,H_1\otimes H_2)$ by
$\mscript _{xy}(\rho )=\lscript _{xy}(\,\kscriptbar (\rho ))$. We then have
\begin{align*}
\mscript _{1x}^1(\rho )&=\lscript _{1x}^1(\,\kscriptbar (\rho ))=\sum _{y\in\Omega _\jscript}\trace _{H_2}\sqbrac{\lscript _{xy}(\,\kscriptbar (\rho ))}
   =\iscript _x(\,\kscriptbar (\rho ))=(\iscript\mid\kscript )_x(\rho )\\
\mscript _{2y}^2(\rho )&=\lscript _{2y}^2(\,\kscriptbar (\rho ))=\sum _{x\in\Omega _\iscript}\trace _{H_1}\sqbrac{\lscript _{xy}(\,\kscriptbar (\rho ))}
   =\jscript _x(\,\kscriptbar (\rho ))=(\jscript\mid\kscript )_y(\rho )
\end{align*}
Hence, $\mscript$ is a joint bi-instrument for $(\iscript\mid\kscript )$ and $(\jscript\mid\kscript )$ so $(\iscript\mid\kscript )\rmco (\jscript\mid\kscript )$. Moreover,
$\mscript =\kscriptbar\circ\lscript$.
\end{proof}

If $\iscript\in\instr (H,H_1)$, then $\iscripthat _x=\iscript _x^*(I_{H_1})\in\ob (H)$ and if $A\in\ob (H_1)$ we define
$(A\mid\iscript )_x=\iscriptbar\,^*(A_x)\in\ob (H)$. Also, if $\jscript\in\instr (H_1,H_2)$ then
\begin{equation*}
(\iscript\circ\jscript )_{xy}(\rho )=\jscript _y(\iscript _x(\rho ))\in\instr (H_1,H_2)
\end{equation*}
and since $(\jscript\mid\iscript )_y(\rho )=\jscript _y(\,\iscriptbar (\rho ))$ we have that $(\jscript\mid\iscript )\in\instr (H_1,H_2)$. Now $\jscripthat\in\ob (H_1)$ so
\begin{equation*}
(\,\jscripthat\mid\iscript )_y=\iscriptbar\,^*(\,\jscripthat _y)\in\ob (H)
\end{equation*}
Also, $(\jscript \mid\iscript )^\wedge\in\ob (H)$ and the next result shows that these two observables coincide.

\begin{lem}    
\label{lem44}
If $\iscript\in\instr (H,H_1)$ and $\jscript\in\instr (H_1,H_2)$, then $(\jscript\mid\iscript )^\wedge=(\,\jscripthat\mid\iscript )$.
\end{lem}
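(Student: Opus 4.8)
The plan is to unwind the definition of the measured observable and push the computation through the dual map. Recall that for any instrument $\lscript\in\instr(H,H_2)$ the measured observable is given by $\lscript^\wedge_y=\lscript_y^*(I_{H_2})$. Since $(\jscript\mid\iscript)\in\instr(H,H_2)$ with $(\jscript\mid\iscript)_y(\rho)=\jscript_y(\,\iscriptbar(\rho))$, the claim $(\jscript\mid\iscript)^\wedge=(\,\jscripthat\mid\iscript)$ reduces to identifying the dual of the operation $(\jscript\mid\iscript)_y$ evaluated at $I_{H_2}$.

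First I would compute the dual $(\jscript\mid\iscript)_y^*$. For $\rho\in\lscript(H)$ and $B\in\lscript(H_2)$, applying the defining property of the dual map twice yields
\begin{equation*}
\trace\sqbrac{(\jscript\mid\iscript)_y(\rho)B}=\trace\sqbrac{\jscript_y(\,\iscriptbar(\rho))B}=\trace\sqbrac{\iscriptbar(\rho)\jscript_y^*(B)}=\trace\sqbrac{\rho\,\iscriptbar\,^*(\jscript_y^*(B))}
\end{equation*}
so that $(\jscript\mid\iscript)_y^*(B)=\iscriptbar\,^*(\jscript_y^*(B))$. This is simply the operation-level statement that the adjoint of the composite $\rho\mapsto\jscript_y(\,\iscriptbar(\rho))$ is $\iscriptbar\,^*\circ\jscript_y^*$, which also follows from the dual-of-sequential-product identity established in Lemma~\ref{lem22}(i).

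Finally I would set $B=I_{H_2}$. Using $\jscript_y^*(I_{H_2})=\jscripthat_y$ together with the definition $(\,\jscripthat\mid\iscript)_y=\iscriptbar\,^*(\,\jscripthat_y)$, I obtain
\begin{equation*}
(\jscript\mid\iscript)^\wedge_y=(\jscript\mid\iscript)_y^*(I_{H_2})=\iscriptbar\,^*(\jscript_y^*(I_{H_2}))=\iscriptbar\,^*(\,\jscripthat_y)=(\,\jscripthat\mid\iscript)_y,
\end{equation*}
which is the desired identity for every $y$.

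The computation is entirely mechanical; the only point demanding care is the bookkeeping in the two successive applications of the dual identity---first moving $\jscript_y$ off the state to produce $\jscript_y^*$, then moving $\iscriptbar$ off to produce $\iscriptbar\,^*$---and keeping the composition order straight so that the adjoint lands as $\iscriptbar\,^*\circ\jscript_y^*$ rather than in the reverse order. I expect no genuine obstacle beyond this.
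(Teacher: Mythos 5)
Your proof is correct and follows essentially the same route as the paper: both arguments amount to two applications of the dual-map identity, moving first $\jscript_y$ and then $\iscriptbar$ off the state, and then evaluating at $I_{H_2}$. The only cosmetic difference is that you compute the full dual $(\jscript\mid\iscript)_y^*=\iscriptbar\,^*\circ\jscript_y^*$ before specializing to the identity, whereas the paper works with traces against $\rho$ directly.
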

\begin{proof}
For all $y\in\Omega _\jscript$ and $\rho\in\sscript (H)$ we obtain
\begin{align*}
\trace\sqbrac{\rho (\jscript\mid\iscript )_y^\wedge}&=\trace\sqbrac{(\jscript\mid\iscript )_y(\rho )}=\trace\sqbrac{\jscript _y(\,\iscriptbar (\rho ))}
   =\trace\sqbrac{\,\iscriptbar (\rho )\jscripthat _y}\\
   &\trace\sqbrac{\rho\iscriptbar\,^*(\,\jscripthat _y)}=\trace\sqbrac{\rho (\,\jscripthat \mid\iscript )_y}
\end{align*}
Hence, $(\jscript\mid\iscript )^\wedge =(\,\jscripthat\mid\iscript )$.
\end{proof}

\begin{cor}    
\label{cor45}
If $\iscript\in\instr (H,H_1)$, $\jscript\in\instr (H,H_2)$, $\kscript\in\instr (H_0,H)$ and $\iscript\rmco\jscript$, then
$(\,\iscripthat\mid\kscript )\rmco (\,\jscripthat\mid\kscript )$.
\end{cor}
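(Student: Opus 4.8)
The plan is to obtain this corollary by chaining the three preceding lemmas rather than building a joint bi-observable from scratch. First I would apply Lemma~\ref{lem43} to the pair $\iscript,\jscript$ conditioned on $\kscript\in\instr(H_0,H)$: since $\iscript\rmco\jscript$ by hypothesis, that lemma yields coexistence of the conditioned instruments $(\iscript\mid\kscript)\in\instr(H_0,H_1)$ and $(\jscript\mid\kscript)\in\instr(H_0,H_2)$, i.e.\ $(\iscript\mid\kscript)\rmco(\jscript\mid\kscript)$.

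Next I would pass from coexistence of instruments to coexistence of the observables they measure. Applying Lemma~\ref{lem22}(ii) to the coexisting pair $(\iscript\mid\kscript)$ and $(\jscript\mid\kscript)$, whose common input space is $H_0$, gives $(\iscript\mid\kscript)^\wedge\rmco(\jscript\mid\kscript)^\wedge$ in $\ob(H_0)$. Finally I would identify these measured observables with the conditioned observables appearing in the statement: Lemma~\ref{lem44}, quoted with $\kscript$ in the role of the conditioning instrument and $\iscript$ (respectively $\jscript$) as the outer instrument, gives $(\iscript\mid\kscript)^\wedge=(\iscripthat\mid\kscript)$ and $(\jscript\mid\kscript)^\wedge=(\jscripthat\mid\kscript)$. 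Substituting these two identities into the coexistence relation from the previous step yields $(\iscripthat\mid\kscript)\rmco(\jscripthat\mid\kscript)$, as required.

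Each step is an immediate invocation, so there is no hard computation; the only thing to watch is the bookkeeping of input/output spaces and the orientation of the conditioning bar, in particular making sure that when Lemma~\ref{lem44} is quoted it is $\kscript\in\instr(H_0,H)$ that supplies the channel $\kscriptbar$ (matching the inner instrument of that lemma) while $\iscript,\jscript$ are the outer ones. As a sanity check, or as a self-contained alternative, one can instead take the joint bi-observable $C\in\ob(H)$ for $\iscripthat,\jscripthat$ furnished by Lemma~\ref{lem22}(ii) and set $D_{xy}=\kscriptbar\,^*(C_{xy})$; since $\kscriptbar$ is a channel, $\kscriptbar\,^*(I_H)=I_{H_0}$ shows $D\in\ob(H_0)$, and linearity of $\kscriptbar\,^*$ together with $(\iscripthat\mid\kscript)_x=\kscriptbar\,^*(\iscripthat_x)$ gives $\sum_y D_{xy}=(\iscripthat\mid\kscript)_x$ and $\sum_x D_{xy}=(\jscripthat\mid\kscript)_y$, exhibiting $D$ as the required joint bi-observable.
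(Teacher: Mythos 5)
Your proposal is correct and follows essentially the same route as the paper's own proof: Lemma~\ref{lem43} to get $(\iscript\mid\kscript)\rmco(\jscript\mid\kscript)$, then passing to measured observables (the paper does this implicitly where you explicitly cite Lemma~\ref{lem22}(ii)), then Lemma~\ref{lem44} to identify $(\iscript\mid\kscript)^\wedge$ with $(\iscripthat\mid\kscript)$. Your concluding direct construction of the joint bi-observable $D_{xy}=\kscriptbar\,^*(C_{xy})$ is a sound self-contained alternative, but the main argument matches the paper.
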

\begin{proof}
By Lemma~\ref{lem43}, $(\iscript\mid\kscript )\rmco (\jscript\mid\kscript )$ so $(\iscript\mid\kscript )^\wedge\rmco (\jscript\mid\kscript )^\wedge$.  By Lemma~\ref{lem44}, $(\,\iscripthat\mid\kscript )=(\iscript\mid\kscript )^\wedge$ and $(\,\jscripthat\mid\kscript )=(\jscript\mid\kscript )^\wedge$ so
$(\,\iscripthat\mid\kscript )\rmco (\,\jscripthat\mid\kscript )$.
\end{proof}

\begin{lem}    
\label{lem46}
Let $A,B\in\ob (H)$ and $\iscript\in\instr (H_1,H)$. If $A\rmco B$, then $(A\mid\iscript )\rmco (B\mid\iscript )$. If $C$ is a joint bi-observable for $A$ and $B$, then
$D_{xy}=\iscriptbar\,^*(C_{xy})$ is a joint bi-observable for $(A\mid\iscript )$ and $(B\mid\iscript )$.
\end{lem}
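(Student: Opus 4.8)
The plan is to push the joint bi-observable for $A$ and $B$ back through the channel $\iscriptbar$ by its dual map, exactly as the statement indicates with $D_{xy}=\iscriptbar\,^*(C_{xy})$. First I would record the three properties of $\iscriptbar\,^*$ that make everything work: it is linear, it is positive (being the dual of a completely positive map), and it is unital. The last point is the one established in Section~2, namely that a channel $\iscriptbar\in\instr(H_1,H)$ satisfies $\iscriptbar\,^*(I_H)=I_{H_1}$; this is what will guarantee that the transported operators are genuine effects and that their total sum is exactly $I_{H_1}$ rather than merely some effect below it.

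With $C\in\ob(H)$ the joint bi-observable for $A,B$, so that $C_x^1=\sum_y C_{xy}=A_x$ and $C_y^2=\sum_x C_{xy}=B_y$, I would verify in turn that $D$ is a bi-observable on $H_1$. Each $D_{xy}=\iscriptbar\,^*(C_{xy})$ is positive because $C_{xy}\in\escript(H)$ and $\iscriptbar\,^*$ is positive, and $D_{xy}\le I_{H_1}$ because $C_{xy}\le I_H$ forces $\iscriptbar\,^*(C_{xy})\le\iscriptbar\,^*(I_H)=I_{H_1}$; hence $D_{xy}\in\escript(H_1)$. Linearity then gives $\sum_{x,y}D_{xy}=\iscriptbar\,^*\paren{\sum_{x,y}C_{xy}}=\iscriptbar\,^*(I_H)=I_{H_1}$, so that $D\in\ob(H_1)$.

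The marginals are then immediate from linearity: $D_x^1=\sum_y\iscriptbar\,^*(C_{xy})=\iscriptbar\,^*\paren{\sum_y C_{xy}}=\iscriptbar\,^*(A_x)=(A\mid\iscript)_x$, and symmetrically $D_y^2=\iscriptbar\,^*(B_y)=(B\mid\iscript)_y$, using the definition $(A\mid\iscript)_x=\iscriptbar\,^*(A_x)$. This exhibits $D$ as a joint bi-observable for $(A\mid\iscript)$ and $(B\mid\iscript)$, giving $(A\mid\iscript)\rmco(B\mid\iscript)$. There is no genuine obstacle here: the whole content is that the dual of a channel is a positive unital linear map, and the only point requiring a moment's care is invoking unitality of $\iscriptbar\,^*$ twice, once to control the upper bound on each $D_{xy}$ and once to obtain the normalization $\sum_{x,y}D_{xy}=I_{H_1}$.
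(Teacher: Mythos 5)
Your proposal is correct and follows exactly the paper's route: define $D_{xy}=\iscriptbar\,^*(C_{xy})$ and compute the marginals by linearity of the dual map, obtaining $D_x^1=\iscriptbar\,^*(A_x)=(A\mid\iscript)_x$ and $D_y^2=(B\mid\iscript)_y$. The only difference is that you explicitly verify $D\in\ob(H_1)$ via positivity and unitality of $\iscriptbar\,^*$, a point the paper's proof asserts without detail.
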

\begin{proof}
We have that $D,(A\mid\iscript ),(B\mid\iscript )\in\ob (H_1)$ and we obtain
\begin{equation*}
D_x^1=\sum _yD_{xy}=\sum _y\iscriptbar\,^*(C_{xy})=\iscript\,^*\paren{\sum _yC_{xy}}=\iscriptbar\,^*(A_x)=(A\mid\iscript )_x
\end{equation*}
and similarly, $D_y^2=(B\mid\iscript )_y$. Hence, $D$ is a joint bi-observable for $(A\mid\iscript )$ and $(B\mid\iscript )$ so $(A\mid\iscript)\rmco (B\mid\iscript )$.
\end{proof}

\begin{exam}{4}  
The converse of Lemma~\ref{lem46} does not hold. To show this, suppose $A,B\in\ob (H)$ do not coexist. Let $\hscript ^{(C,\alpha )}\in\instr (H_1,H)$ be Holevo with
$C\in\ob (H_1)$, $\brac{\alpha}=\alpha\in\sscript (H)$. Then
\begin{align*}
(A\mid\hscript ^{(C,\alpha )})_x=\hscript ^{(C,\alpha )*}(A_x)&=\sum _z\trace (\alpha A_x)C_z=\trace (\alpha A_x)I_{H_1}\\
(B\mid\hscript ^{(C,\alpha )})_y=\hscript ^{(C,\alpha )*}(B_y)&=\sum _z\trace (\alpha B_y)C_z=\trace (\alpha B_y)I_{H_1}
\end{align*}
Letting $D_{xy}=\trace (\alpha A_x)\trace (\alpha B_y)I_{H_1}\in\ob (H_1)$, we have that $D$ is a joint bi-observable for $(A\mid\hscript ^{(C,\alpha )})$ and
$(B\mid\hscript ^{(C,\alpha )})$. Hence, $(A\mid\hscript ^{(C,\alpha )})\rmco (B\mid\hscript ^{(C,\alpha )})$ but $A,B$ do not coexist.\hfill\qedsymbol
\end{exam}

We say that an observable $A$ is \textit{sharp} if $A_x$ is a projection for all $x\in\Omega _A$ and an instrument $\iscript$ is \textit{sharp} if $\iscripthat$ is sharp
\cite{gud120,hz12,nc00}.

\begin{thm}    
\label{thm47}
Let $\iscript\in\instr (H,H_1)$ and $A\in\ob (H_1)$.
{\rm{(i)}}\enspace $(A\mid\iscript )\rmco\iscripthat$.
{\rm{(ii)}}\enspace If $\iscript$ is sharp, then $(A\mid\iscript )$ commutes with $\iscripthat$.
\end{thm}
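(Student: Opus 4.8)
For part (i), the plan is to write down an explicit joint bi-observable and read off its marginals. The natural candidate is $C\in\ob(H)$ with outcome space $\Omega_A\times\Omega_\iscript$ defined by $C_{xz}=\iscript_z^*(A_x)$ (essentially the bi-observable ``$A$ given $\iscript$'' introduced earlier, up to the order of indices). First I would confirm that $C$ is an observable: since each $\iscript_z^*$ is a positive map and $A_x\ge 0$ we get $C_{xz}\ge 0$, and
\begin{equation*}
\sum_{x,z}C_{xz}=\sum_z\iscript_z^*\paren{\sum_x A_x}=\sum_z\iscript_z^*(I_{H_1})=\iscriptbar\,^*(I_{H_1})=I_H,
\end{equation*}
the last equality because $\iscriptbar$ is a channel; hence each $C_{xz}$ lies below $I_H$ and so in $\escript(H)$. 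The two marginals are then $\sum_z C_{xz}=\iscriptbar\,^*(A_x)=(A\mid\iscript)_x$ and $\sum_x C_{xz}=\iscript_z^*(I_{H_1})=\iscripthat_z$, which is exactly what coexistence of $(A\mid\iscript)$ and $\iscripthat$ demands.

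For part (ii), I would exploit that sharpness makes $\brac{\iscripthat_z}$ a projection-valued measure. Since $\iscript$ is sharp each $\iscripthat_z=\iscript_z^*(I_{H_1})$ is a projection, and these sum to $\iscriptbar\,^*(I_{H_1})=I_H$; the first step is the standard fact that projections summing to $I$ are mutually orthogonal, $\iscripthat_z\iscripthat_w=\delta_{zw}\iscripthat_z$ (each $\iscripthat_z\iscripthat_w\iscripthat_z$ is positive and summing over $w$ returns $\iscripthat_z$, forcing the off-diagonal products to vanish). The key inequality I would then record is that $0\le A_x\le I_{H_1}$ together with positivity of $\iscript_z^*$ gives $0\le\iscript_z^*(A_x)\le\iscript_z^*(I_{H_1})=\iscripthat_z$.

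The crux is an elementary domination lemma: if an effect $E$ satisfies $0\le E\le P$ with $P$ a projection, then $E=PEP$ (writing $E=F^*F$, the inequality forces $(I-P)E(I-P)=0$, hence $F(I-P)=0$ and $E=EP=PE$). Applying it to $E=\iscript_z^*(A_x)$ and $P=\iscripthat_z$ yields $\iscript_z^*(A_x)=\iscripthat_z\,\iscript_z^*(A_x)\,\iscripthat_z$, so that $(A\mid\iscript)_x=\iscriptbar\,^*(A_x)=\sum_z\iscripthat_z\,\iscript_z^*(A_x)\,\iscripthat_z$. Orthogonality then makes multiplication by any $\iscripthat_w$ collapse this sum to its single $w$-term on either side,
\begin{equation*}
\iscripthat_w(A\mid\iscript)_x=\iscripthat_w\,\iscript_w^*(A_x)\,\iscripthat_w=(A\mid\iscript)_x\iscripthat_w,
\end{equation*}
which proves commutation.

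The one genuinely non-formal ingredient, and the step I expect to carry the whole argument, is this domination lemma: it is precisely where sharpness converts the mere inequality $\iscript_z^*(A_x)\le\iscripthat_z$ into the sandwich identity that kills every cross term. Everything else is routine bookkeeping, resting only on $\iscriptbar\,^*=\sum_z\iscript_z^*$ and $\iscriptbar\,^*(I_{H_1})=I_H$.
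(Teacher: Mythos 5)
Your proof of part (i) is correct and is essentially the paper's own argument: the paper defines the joint bi-observable $B_{xy}=\iscript _x^*(A_y)$, checks $\sum _{x,y}B_{xy}=I_H$, and reads off the marginals $B_x^1=\iscripthat _x$ and $B_y^2=(A\mid\iscript )_y$, exactly as you do up to the order of the indices.

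For part (ii) you take a genuinely different route. The paper is terse here: it notes that $\iscripthat$ is sharp, invokes part (i) to get that $\iscripthat _x$ and $(A\mid\iscript )_y$ are coexisting effects, and then cites the standard fact from the operational quantum mechanics literature that an effect coexisting with a projection must commute with it. You instead prove commutation directly and self-containedly: from $0\le\iscript _z^*(A_x)\le\iscript _z^*(I_{H_1})=\iscripthat _z$ and the domination lemma ($0\le E\le P$ with $P$ a projection forces $E=PEP$) you obtain the sandwich decomposition $(A\mid\iscript )_x=\sum _z\iscripthat _z\,\iscript _z^*(A_x)\,\iscripthat _z$, and the mutual orthogonality of the projections $\iscripthat _z$ (which do sum to $I_H$ since $\iscriptbar$ is a channel) collapses the sum under left or right multiplication by $\iscripthat _w$. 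All the ingredients check out: positive maps preserve the operator order, the domination lemma argument via $E=F^*F$ and $(I-P)E(I-P)=0$ is sound, and projections summing to the identity are indeed pairwise orthogonal by the positivity argument you sketch. Your version buys a self-contained proof with no external citation, and it yields the explicit identity $\iscripthat _w(A\mid\iscript )_x=\iscripthat _w\,\iscript _w^*(A_x)\,\iscripthat _w$, which is strictly more information than bare commutativity; the paper's version buys brevity and makes visible that (ii) is an instance of a general coexistence-plus-sharpness principle. Note also that your argument for (ii) does not actually use part (i) at all, whereas the paper's does.
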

\begin{proof}
(i)\enspace Let $B_{xy}$ be the bi-observable on $H$ given by $B_{xy}=\iscript _x^*(A_y)$. Notice that $B_{xy}$ is indeed an observable because
\begin{equation*}
\sum _{x,y}B_{xy}=\sum _{x,y}\iscript _x^*(A_y)=\sum _x\iscript _x^*\paren{\sum _yA_y}=\sum _x\iscript _x^*(I_{H_1})=\sum _x\iscripthat _x=I_H
\end{equation*}
We have that
\begin{align*}
B_x^1&=\sum _yB_{xy}=\iscript _x^*(I_{H_1})=\iscripthat _x\\
B_y^2&=\sum _xB_{xy}=\sum _x\iscript _x^*(A_y)=\iscriptbar\,^*(A_y)=(A\mid\iscript )_y
\end{align*}
so $(A\mid\iscript )\rmco\iscripthat$.
(ii)\enspace If $\iscript$ is sharp, then $\iscripthat$ is sharp and by (i) we have that $\iscripthat\rmco (A\mid\iscript )$. It follows that $\iscripthat _x$ and
$(A\mid\iscript )_y$ are coexisting effects \cite{bgl95,hz12}. Since $\iscripthat _x$ is a projection we conclude that $\iscripthat _x$ and $(A\mid\iscript )_y$ commute for all $x,y$ \cite{bgl95,hz12}.
\end{proof}

\begin{thm}    
\label{thm48}
{\rm{(i)}}\enspace If $\iscript\in\instr (H,H_1)$, $\jscript\in\instr (H_1,H_2)$, then $(\iscript _x\circ\jscript _y)^\wedge =\iscript _x^*(\,\jscripthat _y)$ for all $x,y$.
{\rm{(ii)}}\enspace If $\iscript ,\jscript\in\instr (H)$, then $\iscript\circ\jscript =\jscript\circ\iscript$ implies $\iscript _x^*(\,\jscripthat _y)=\jscript _y^*(\,\iscripthat _x)$ for all $x,y$ which implies $(\iscript\circ\jscript )^\wedge =(\jscript\circ\iscript )^\wedge$.
(iii)\enspace If $\iscript ,\jscript\in\instr (H)$ with $\iscript\circ\jscript =\jscript\circ\iscript$, then $(\,\iscripthat\mid\jscript )=\iscripthat$ and
$(\,\jscripthat\mid\iscript )=\jscripthat$.
\end{thm}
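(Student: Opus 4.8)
The plan is to prove (i) by a single Born-rule computation and then obtain (ii) and (iii) as formal corollaries of (i), using only linearity of the dual maps and the normalizations $\sum_x\iscripthat_x=\sum_y\jscripthat_y=I_H$.

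First I would settle (i). Writing $\kscript=\iscript_x\circ\jscript_y\in\oscript(H,H_2)$, the effect $\kscript^\wedge=\kscript^*(I_{H_2})$ is characterized by $\trace\sqbrac{\rho\kscript^\wedge}=\trace\sqbrac{\kscript(\rho)}$ for all $\rho$. So for arbitrary $\rho\in\sscript(H)$ I compute
\begin{equation*}
\trace\sqbrac{(\iscript_x\circ\jscript_y)(\rho)}=\trace\sqbrac{\jscript_y\paren{\iscript_x(\rho)}}=\trace\sqbrac{\iscript_x(\rho)\jscripthat_y}=\trace\sqbrac{\rho\iscript_x^*(\jscripthat_y)},
\end{equation*}
where the second equality is the defining property of the observable $\jscripthat_y$ measured by $\jscript$ (valid for every operator argument, hence for $\iscript_x(\rho)$) and the third is the definition of the dual $\iscript_x^*$. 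Since $\rho$ is arbitrary, $(\iscript_x\circ\jscript_y)^\wedge=\iscript_x^*(\jscripthat_y)$.

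For (ii) I read the hypothesis $\iscript\circ\jscript=\jscript\circ\iscript$, under the identification of $\Omega_\iscript\times\Omega_\jscript$ with $\Omega_\jscript\times\Omega_\iscript$ via $(x,y)\leftrightarrow(y,x)$, as the operator identity $\iscript_x\circ\jscript_y=\jscript_y\circ\iscript_x$ for all $x,y$. Applying the hat map and invoking (i) in each order gives $\iscript_x^*(\jscripthat_y)=(\iscript_x\circ\jscript_y)^\wedge=(\jscript_y\circ\iscript_x)^\wedge=\jscript_y^*(\iscripthat_x)$, the first implication. For the second, (i) also yields $(\iscript\circ\jscript)_{xy}^\wedge=\iscript_x^*(\jscripthat_y)$ and $(\jscript\circ\iscript)_{yx}^\wedge=\jscript_y^*(\iscripthat_x)$, so the middle identity forces these to coincide for all $x,y$, i.e.\ $(\iscript\circ\jscript)^\wedge=(\jscript\circ\iscript)^\wedge$.

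Finally, (iii) is a one-line consequence of (ii). Using $\jscriptbar^*=\sum_y\jscript_y^*$ and the definition $(\iscripthat\mid\jscript)_x=\jscriptbar^*(\iscripthat_x)$, I would write
\begin{equation*}
(\iscripthat\mid\jscript)_x=\sum_y\jscript_y^*(\iscripthat_x)=\sum_y\iscript_x^*(\jscripthat_y)=\iscript_x^*\paren{\sum_y\jscripthat_y}=\iscript_x^*(I_H)=\iscripthat_x,
\end{equation*}
applying (ii), then linearity of $\iscript_x^*$, then $\sum_y\jscripthat_y=I_H$; the symmetric computation gives $(\jscripthat\mid\iscript)=\jscripthat$. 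The one point needing care---rather than any genuine obstacle---is the outcome-space bookkeeping in (ii): once the transposition $(x,y)\leftrightarrow(y,x)$ is fixed, every assertion reduces to (i), linearity, and the normalizations of $\iscripthat$ and $\jscripthat$.
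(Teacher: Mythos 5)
Your proposal is correct and follows essentially the same route as the paper: part (i) is the identical Born-rule/dual-map computation, part (ii) applies (i) in both orders exactly as the paper does, and part (iii) is the paper's chain $\iscripthat_x=\iscript_x^*(I_H)=\sum_y\iscript_x^*(\jscripthat_y)=\sum_y\jscript_y^*(\iscripthat_x)=(\iscripthat\mid\jscript)_x$ read in the opposite direction. Your explicit remark about the transposition $(x,y)\leftrightarrow(y,x)$ of outcome spaces is a small clarification the paper leaves implicit, but it does not change the argument.
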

\begin{proof}
(i)\enspace For all $\rho\in\sscript (H)$, we have
\begin{align*}
\trace\sqbrac{\rho (\iscript _x\circ\jscript _y)^\wedge}&=\trace\sqbrac{\iscript _x\circ\jscript _y(\rho )}=\trace\sqbrac{\jscript _y(\iscript _x(\rho ))}\\
   &=\trace\sqbrac{\iscript _x(\rho )\jscripthat _y}=\trace\sqbrac{\rho\iscript _x^*(\,\jscripthat _y)}
\end{align*}
It follows that $(\iscript _x\circ\jscript _y)^\wedge =\iscript _x^*(\jscript _y)$ for $x,y$.
(ii)\enspace If $\iscript\circ\jscript =\jscript\circ\iscript$, then by (i) we obtain
\begin{equation*}
\iscript _x^*(\,\jscripthat _y)=(\iscript _x\circ\jscript _y)^\wedge =(\jscript _y\circ\iscript _x)^\wedge =\jscript _y^*(\,\iscripthat _x)
\end{equation*}
for all $x,y$. Moreover, if $\iscript _x^*(\,\jscripthat _y)=\jscript _y^*(\,\iscripthat _x)$ then by (i) we have
$(\iscript _x\circ\jscript _y)^\wedge =(\jscript _y\circ\iscript _x)^\wedge$.
(iii)\enspace If $\iscript\circ\jscript =\jscript\circ\iscript$, then by (ii) we obtain
\begin{equation*}
\iscripthat _x=\iscript _x^*(I_H)=\sum _y\iscript _x^*(\,\jscripthat _y)=\sum _y\jscript _y^*(\,\iscripthat _x)=(\,\iscripthat\mid\jscript )_x
\end{equation*}
Hence, $\iscripthat =(\,\iscripthat\mid\jscript )$ and similarly, $\jscripthat =(\,\jscripthat\mid\iscript )$.
\end{proof}

\begin{exam}{5}  
Let $\hscript ^{(A,\alpha )},\hscript ^{(B,\beta )}\in\instr (H)$ be Holevo. We have seen in the second paragraph of Section~3 that
\begin{equation*}
\hscript _x^{(A,\alpha )}\circ\hscript _y^{(B,\beta )}(\rho )=\trace (\rho A_x)\trace (\alpha _xB_y)\beta _y
\end{equation*}
and similarly,
\begin{equation*}
\hscript _y^{(B,\beta )}\circ\hscript _x^{(A,\alpha )}(\rho )=\trace (\rho B_y)\trace (\beta _yA_x)\alpha _x
\end{equation*}
Hence, $\hscript _x^{(A,\alpha )}\circ\hscript _y^{(B,\beta )}=\hscript _y^{(B,\beta )}\circ\hscript _x^{(A,\alpha )}$ if and only if
\begin{equation}                
\label{eq43}
\trace (\rho A_x)\trace (\alpha _xB_y)\beta _y=\trace (\rho B_y)\trace (\beta _yA_x)\alpha _x
\end{equation}
for all $\rho\in\sscript (H)$. Taking the trace of \eqref{eq43} gives
\begin{equation}                
\label{eq44}
\trace (\rho A_x)\trace (\alpha _xB_y)=\trace (\rho B_y)\trace (\beta _yA_x)
\end{equation}
for all $\rho\in\sscript (H)$. Applying \eqref{eq44} we have
\begin{equation}                
\label{eq45}
\trace\sqbrac{\rho\trace (\alpha _xB_y)A_x}=\trace\sqbrac{\rho\trace (\beta _yA_x)B_y}
\end{equation}
so we have
\begin{equation}                
\label{eq46}
\trace (\alpha _xB_y)A_x=\trace (\beta _yA_x)B_y
\end{equation}
Applying \eqref{eq43} and \eqref{eq44} we obtain $\beta _y=\alpha _x=\gamma\in\sscript (H)$ for all $x,y$ and \eqref{eq46} becomes
\begin{equation*}
\trace (\gamma B_y)A_x=\trace (\gamma A_x)B_y
\end{equation*}
for all $x,y$. Summing over $y$ gives $A_x=\trace (\gamma A_x)I_H$. We conclude that if 
\begin{equation}                
\label{eq47}
\hscript ^{(A,\alpha )}\circ\hscript ^{(B,\beta )}=\hscript ^{(B,\beta )}\circ\hscript ^{(A,\alpha )}
\end{equation}
then $AB=BA$. The converse does not hold because we can have $AB=BA$ but \eqref{eq43} does not hold (for example, let $A_x\ne\trace (\gamma A_x)I_H$) so
\eqref {eq47} does not hold.\hfill\qedsymbol
\end{exam}

\section{Measurement Models}  
We begin a study of measurement models \cite{dl70,gud220,hz12}. This section only gives an introduction to the theory and we leave more details to later work. If
$A\in\ob (H)$, we define the \textit{L\"uders instrument} $\lscript ^A\in\instr (H)$ corresponding to $A$ by $\lscript _x^A(\rho )=A_x^{1/2}\rho A_x^{1/2}$ for all $x\in\Omega _A$, $\rho\in\sscript (H)$ \cite{lud51}. Notice that $\lscript ^A$ is a special type of Kraus instrument with Kraus operators $A_x^{1/2}$. Since
\begin{equation*}
\trace\sqbrac{\lscript _x^A(\rho )}=\trace (A_x^{1/2}\rho A_x^{1/2})=\trace (\rho A_x)
\end{equation*}
we have that $(\lscript ^A)^\wedge =A$ and every observable is measured by its corresponding L\"uders instrument. If $A$ is sharp, then $\lscript ^A$ has the form
$\lscript _x^A(\rho )=A_x\rho A_x$.

A measurement model $M$ is an apparatus that can be employed to gain information about a quantum system $S$. If $S$ is described by a Hilbert space $H$, we call $H$ the base space. We interact $H$ with an auxiliary Hilbert space $K$ using an instrument $\iscript\in\instr (H,H\otimes K)$. We then measure a probe observable $P\in\ob (K)$. The result of this measurement gives information about the state of $S$ or observables on $S$. We now make this description mathematically precise.
A \textit{measurement model} is a four-tuple $M=(H,K,\iscript ,P)$ where $H$ is the \textit{base space} Hilbert space, $K$ is the \textit{auxiliary} Hilbert space,
$\iscript\in\instr (H,H\otimes K)$ is the \textit{interaction instrument} and $P\in\ob (K)$ is the \textit{probe observable}. This definition is a generalization of the measurement models that have already been studied \cite{bgl95,hz12}. The \textit{measurement instrument} $\mscript\in\instr (H,H\otimes K)$ for the model $M$ is given by the bi-instrument
\begin{equation*}
\mscript _{xy}=\iscript _y\circ\lscript ^{I_H\otimes P_x}
\end{equation*}
which results from first applying the interaction and then measuring the probe observable. Thus, for all $\rho\in\sscript (H)$ we have
\begin{equation*}
\mscript _{xy}(\rho )=\lscript ^{I_H\otimes P_x}\sqbrac{\iscript _y(\rho )}=(I_H\otimes P_x)^{1/2}\iscript _y(\rho )(I_H\otimes P_x)^{1/2}
\end{equation*}
The measurement instrument contains the information obtained from $M$. In particular, the \textit{marginal measurement instrument} is the instrument
$\mscript ^1\in\instr (H,H\otimes K)$ given by
\begin{equation*}
\mscript _x^1(\rho )=\sum _y\mscript _{xy}(\rho )=\lscript ^{I_H\otimes P_x}\sqbrac{\,\iscriptbar (\rho )}=\iscriptbar\circ\lscript ^{I_H\otimes P_x}(\rho )
\end{equation*}
We call the reduced marginal instrument $\mscript _1^1\in\instr (H)$ the \textit{instrument measured by} $M$ and we obtain
\begin{equation*}
\mscript _{1x}^1(\rho )=\trace _K\sqbrac{\mscript _x^1(\rho )}=\trace _K\sqbrac{\,\iscriptbar\circ\lscript ^{I_H\otimes P_x}(\rho )}
\end{equation*}
for all $\rho\in\sscript (H)$. We call the observable $\mscripthat\colon\ob (H)$, the \textit{observable measured by} $M$. Since $\mscripthat _1^{\,1}$ satisfies
\begin{align*}
\trace (\rho\mscripthat _{1x}^{\,1})&=\trace\sqbrac{\mscript _{1x}^1(\rho )}=\trace\sqbrac{\lscript ^{I_H\otimes P_x}(\,\iscriptbar (\rho ))}\\
   &=\trace\sqbrac{(I_H\otimes P_x)^{1/2}\iscriptbar\,(\rho )(I_H\otimes P_x)^{1/2}}\\
   &=\trace\sqbrac{\,\iscriptbar (\rho )(I_H\otimes P_x)}=\trace\sqbrac{\rho\iscriptbar\,^*(I_H\otimes P_x)}
\end{align*}
we conclude that $\mscripthat _{1x}^{\,1}=\iscriptbar\,^*(I_H\otimes P_x)$.

Suppose $\iscript$ is a Holevo instrument $\iscript =\hscript ^{(A,\alpha )}$, where $A\in\ob (H)$ and
$\alpha =\brac{\alpha _x\colon x\in\Omega _A}\subseteq\sscript (H\otimes K)$. Then
\begin{align*}
\mscript _{xy}(\rho )&=\lscript ^{I_H\otimes P_x}(\iscript _y(\rho ))=\lscript ^{I_H\otimes P_x}\sqbrac{\trace (\rho A_y)\alpha _y}
   =\trace (\rho A_y)\lscript ^{I_H\otimes P_x}(\alpha _y)\\
   &=\trace (\rho A_x)(I_H\otimes P_x)^{1/2}\alpha _y(I_H\otimes P_x)^{1/2}
\end{align*}
and we obtain the instrument measured by $M$:
\begin{equation*}
\mscript _{1x}^1(\rho )=\sum _y\trace (\rho A_y)\trace _K\sqbrac{(I_H\otimes P_x)^{1/2}\alpha _y(I_H\otimes P_x)^{1/2}}
\end{equation*}
Since $\iscript _y^*(a)=\trace (\alpha _ya)A_y$ for all $a\in\escript (H\otimes K)$ we have $\iscriptbar\,^*(a)=\sum _y\trace (\alpha _ya)A_y$. Then the observable measured by $M$ becomes
\begin{equation*}
\mscripthat _{1x}^{\,^1}=\iscriptbar\,^*(I_H\otimes P_x)=\sum _y\trace\sqbrac{\alpha _y(I_H\otimes P_x)}A_y
\end{equation*}
which is a post-processing of $A$ because $\sum _x\trace\sqbrac{\alpha _y(I_H\otimes P_x)}=1$ for all $y$. In the particular case where $P$ is sharp and
$\alpha _y=\beta _y\otimes\gamma _y$, $\beta _y\in\sscript (H)$, $\gamma _y\in\sscript (K)$ we obtain
\begin{equation*}
\mscript _{xy}(\rho )=\trace (\rho A_y)(I_H\otimes P_x)\beta _y\otimes\gamma _y(I_H\otimes P_x)=\trace (\rho A_y)\beta _y\otimes P_x\gamma _yP_x
\end{equation*}
It follows that
\begin{align*}
\mscript _{1x}^1(\rho )&=\sum _y\trace (\rho A_y)\trace _K(\beta _y\otimes P_x\gamma _yP_x)=\sum _y\trace (\rho A_y)\trace (P_x\gamma _y)\beta _y\\
   &=\sum _y\trace\sqbrac{P_x\hscript _y^{(A,\gamma )}(\rho )}\beta _y\\
\intertext{and}
\mscripthat _{1x}^{\,1}&=\sum _y\trace\sqbrac{\beta _y\otimes\gamma _y(I_H\otimes P_x)}A_y\\
   &=\sum _y\trace (\beta _y\otimes\gamma _yP_x)A_x=\sum _y\trace (\gamma _yP_x)A_y
\end{align*}

Finally, we introduce the sequential product of measurement models. Let $M=(H,K,\iscript ,P)$ and $M_1=(H\otimes K,K_1,\iscript _1,P_1)$ be measurement models where $\iscript\in\instr (H,H\otimes K)$, $P\in\ob (K)$, $\iscript _1\in\instr (H\otimes K,H\otimes K\otimes K_1)$, $P_1\in\ob (K_1)$. The
\textit{sequential product of} $M$ \textit{then} $M_1$ is the measurement model
\begin{equation*}
M_2=M\circ M_1=(H,K\otimes K_1,\iscript _2,P_2)
\end{equation*}
where $\iscript _2\in\instr (H,H\otimes K\otimes K_1)$ is given by $\iscript _2=\iscript\circ\iscript _1$ and $P_2\in\ob (K\otimes K_1)$ is given by
$P_{2xy}=P_x\otimes P_{1y}$. The corresponding measurement instrument for $M_2$ because the 4-instrument $\mscript\in\instr (H,H\otimes K\otimes K_1)$ defined as
\begin{equation*}
\mscript _{xyx'y'}=\iscript _{2x'y'}\circ\lscript ^{I_H\otimes P_{2xy}}
\end{equation*}
Hence,
\begin{align*}
\mscript _{xyx'y'}(\rho )&=\lscript ^{I_H\otimes P_{2xy}}\sqbrac{\iscript _{2x'y'}(\rho )}=(I_H\otimes P_{2xy})^{1/2}\iscript _{2x'y}(\rho )(I_H\otimes P_{2xy})^{1/2}\\
   &=(I_H\otimes P_x\otimes P_{1y})^{1/2}\iscript _{1x'}(\iscript _{y'}(\rho ))(I_H\otimes P_x\otimes P_{1y})^{1/2}
\end{align*}
The marginal measurement $\mscript _{xy}^1\in\instr (H,H\otimes K\otimes K_1)$ becomes
\begin{equation*}
\mscript _{xy}^1(\rho )=\sum _{x',y'}\mscript _{xyx'y'}=(I_H\otimes P_x\otimes P_{1y})^{1/2}\iscriptbar _1(\,\iscriptbar (\rho ))(I_H\otimes P_x\otimes P_{1y})^{1/2}
\end{equation*}
We then obtain the instrument $\mscript _{1xy}^1\in\instr (H)$ measured by $M_2$ as
\begin{equation*}
\mscript _{1xy}^1(\rho )=\trace _{K\otimes K_1}\sqbrac{\mscript _{xy}^1(\rho )}
\end{equation*}
and the observable $\mscripthat _{1xy}^{\,1}$ measured $M_2$ becomes
\begin{equation*}
\mscripthat _{1xy}^{\,1}=\iscriptbar\,^*\sqbrac{\,\iscriptbar _1^{\,*}(I_H\otimes P_x\otimes P_{1y})}
\end{equation*}


\begin{thebibliography}{99}
\bibitem{dapt22}G.\,D'Ariano, P.\,Perinotti and A.\,Tosini, Incompatibility of observables, channels and instruments in information theories, 
arXiv:quant-ph 2204.07956 (2022).
\bibitem{bkmpt22}F.\,Buscemi, K.\,Kobayashi, S.\,Minagawa, P.\,Perinotti and A.\,Tosini, Unifying different notions of quantum incompatibility into a strict hierarchy of resource theories of communication, arXiv:quant-ph 2211.09226 (2022).
\bibitem{bgl95}P.\,Busch, M.\,Grabowski and P.\,Lahti, \textit{Operational Quantum Physics}, Springer-Verlag, Berlin,
 1995.
\bibitem{dl70}E.\,Davies and J.\,Lewis, An operational approach to quantum probability, \textit{Comm. Math. Phys.} \textbf{17}, 239--260 (1970).
\bibitem{gn01}S.\,Gudder and G.\,Nagy, Sequential quantum meassurements, \textit{J.~Math.~Phys.} \textbf{42}, 5212-5222 (2001).
\bibitem{gud120}S.\,Gudder, Quantum instruments and conditioned observables, arXiv:quant-ph 2005.08117 (2020).
\bibitem{gud220}---, Combinations of quantum observables and instruments, arXiv:quant-ph 2010.08025 (2020).
\bibitem{gud21}---, Sequential products of quantum measurements, arXiv:quant-ph 2108.07925 (2021).
\bibitem{gud22}---, Dual instruments and sequential products of observables, arXiv:quant-ph 2208.07923 (2022).
\bibitem{hz12}T.\,Heinosaari and M.\,Ziman, \textit{The Mathematical Language of Quantum Theory}, Cambridge University Press, Cambridge, 2012.
\bibitem{hol82}A.\,Holevo, Probabilistic and Statistical Aspects of Quantum Theory, \textit{North-Holland}, Amsterdam, 1982.
\bibitem{kra83}K.\,Kraus, \textit{States, Effects and Operations}, Springer-Verlag, Berlin, 1983.
\bibitem{ls22}L\,Lepp\"aj\"arvi and M.\,Sedl\'ak, Incompatibility of quantum instruments, arXiv:quant-ph 2212.11225v1 (2022).
\bibitem{lud51}G.~L\"uders, \"Uber due Zustands\"anderung durch den Messprozess, \textit{Ann.~Physik}
\textbf{6}, 322--328 (1951).
\bibitem{mt122}A.\,Matra and M.\,Farkas, On the compatibility of quantum instruments, arXiv:quant-ph 2110.00932v3 (2022).
\bibitem{mt222}---, Characterizing and quantifying the incompatibility of quantum instruments, arXiv:quant-ph 2209.0262 (2022).
\bibitem{nc00}M.\,Nielson and I.\,Chuang, Quantum Computation and Quantum Information, Cambridge University Press, Cambridge, 2000.

\end{thebibliography}
\end{document}